\documentclass{article}
\usepackage{amssymb,amsmath,amsthm}
\usepackage{newcent}
\usepackage{color,graphicx}
\usepackage{epstopdf}
\DeclareGraphicsRule{.tif}{png}{.png}{`convert #1 `dirname #1`/`basename #1 .tif`.png}

\bibliographystyle{plain}

\textwidth=6.5in\hoffset=-2cm
\textheight=9.5in\voffset=-3cm

\newcommand{\comment}[1]{} 

\def\d{{\rm d}}

\def\D{\mbox{\rm D}} 
\def\i{\mbox{\rm i}}
\def\cdot{{\scriptstyle\,\bullet\,}}

\def\p{Pain\-lev\'{e}}
\def\kp{Kadomtsev-Petviashvili}
\def\sch{Schr\"{o}dinger}
\def\PI{\mbox{\rm P$_{\rm I}$}}
\def\PII{\mbox{\rm P$_{\rm II}$}}

\def\PIV{\mbox{\rm P$_{\rm IV}$}}

\newcommand{\pderiv}[3][]{\frac{\partial^{#1}{#2}}{{\partial{#3}}^{#1}}}

\def\beq{\begin{equation}}
\def\eeq{\end{equation}}

\def\imp{\int_{-\infty}^{\infty}}
\def\intR{\int_{-R}^{R}}
\def\dimp{\imp\imp}
\def\dintR{\intR\intR}

\def\O{\mathcal{O}}

\def\a{\alpha}
\def\b{\beta}

\def\ph{\varphi}
\def\ep{\varepsilon}
\def\etal{\textit{et al.}}
\def\i{\ifmmode{\rm i}\else\char"10\fi}
\def\uu{u}
\def\ff{F}

\newtheorem{theorem}{Theorem}
\newtheorem{corollary}[theorem]{Corollary}
\newtheorem{remark}[theorem]{Remark}

\newtheorem{conjecture}[theorem]{Conjecture}
\newtheorem{lemma}[theorem]{Lemma}
\numberwithin{equation}{section}
\numberwithin{table}{section}
\numberwithin{figure}{section}
\numberwithin{theorem}{section}

\begin{document}
\title{Conservation laws and integral relations\\ 
for the Boussinesq equation}
\author{Adrian Ankiewicz\\
Optical Sciences Group, Research School of Physics and Engineering,\\
The Australian National University, Canberra ACT 2601, Australia\\
Email: \texttt{ana124@physics.anu.edu.au}\\[5pt]
Andrew P.\ Bassom\\
School of Mathematics and Physics, University of Tasmania,\\ Private Bag 37, Hobart, Tasmania 7001, Australia\\
Email: \texttt{andrew.bassom@utas.edu.au}\\[5pt]
Peter A.\ Clarkson and Ellen Dowie\\ School of Mathematics, Statistics and Actuarial Science\\
University of Kent, Canterbury, CT2 7FS, UK\\
Email: \texttt{P.A.Clarkson@kent.ac.uk}, \texttt{ed275@kent.ac.uk}
}

\maketitle
\begin{abstract}
We are concerned with conservation laws and integral relations associated with rational solutions of the Boussinesq equation, a soliton equation solvable by inverse scattering which was first introduced by Boussinesq in 1871. The rational solutions are logarithmic derivatives of a polynomial, are algebraically decaying and have a similar appearance to rogue-wave solutions of the focusing nonlinear Schr\"{o}dinger equation. For these rational solutions the constants of motion associated with the conserved quantities are zero and they have some interesting  integral relations which depend on the total degree of the associated polynomial.
\end{abstract}

\begin{center}\textit{This paper is dedicated to the memory of Professor David J.\ Benney}\end{center}

\section{\label{sec1}Introduction} 
In this paper we discuss conservation laws and integral relations associated with algebraically decaying rational solutions $u=u(x,t)$ of the Boussinesq equation 
\beq
u_{tt}+u_{xx}-(u^2)_{xx}-\tfrac13u_{xxxx}=0,\label{eq:bq}
\eeq
where subscripts denote partial derivatives. Equation \eqref{eq:bq} was introduced by Boussinesq in 1871 to describe the propagation of long waves in shallow water \cite{refBousa,refBousc}; see also \cite{refUrs,refWhit}. 
Benney and Luke \cite{refBL64} showed that certain classical equations derived by mathematicians in the late 1800s, such as the Boussinesq equation \eqref{eq:bq} and the Korteweg-de Vries equation
\beq u_t+6uu_x+u_{xxx}=0,\label{eq:kdv}\eeq
actually were generic approximations of weakly nonlinear-weakly dispersive wave phenomena.
The Boussinesq equation \eqref{eq:bq} is also a soliton equation solvable by inverse scattering \cite{refAH,refBZ,refDTT,refZak} which arises in several other physical applications including one-dimensional nonlinear lattice-waves \cite{refToda,refZab}, vibrations in a nonlinear string \cite{refZak}, and ion sound waves in a plasma \cite{refInR,refScott}. We remark that equation \eqref{eq:bq} is sometimes referred to as the ``bad" Boussinesq equation, i.e.\ when the ratio of the $u_{tt}$ and $u_{xxxx}$ terms is negative. If the sign of the $u_{xxxx}$ term is reversed in \eqref{eq:bq}, then the equation is sometimes called the ``good" Boussinesq equation. The coefficients of the $u_{xx}$ and $(u^2)_{xx}$ terms can be changed by scaling and translation of the dependent variable $u$. For example, letting $u\to u+1$ in \eqref{eq:bq} gives
\beq
u_{tt}-u_{xx}-(u^2)_{xx}-\tfrac13u_{xxxx}=0,\label{eq:bq1}
\eeq 
which is the non-dimensionalised form of the equation originally written down by Boussinesq  \cite{refBousa,refBousc}.

Recently Clarkson and Dowie \cite{refCD16} studied rational solutions $u_n(x,t)$ of the Boussinesq equation \eqref{eq:bq}. These rational solutions, which are algebraically decaying and can depend on two arbitrary parameters, have the form
\beq\label{ratun} u_n(x,t;\a,\b) = 2\pderiv[2]{}{x}\ln F_n(x,t;\a,\b),\eeq
where $F_{n}(x,t;\a,\b)$ is a polynomial of degree $n(n+1)$ in both $x$ and $t$, with total degree $n(n+1)$, with $\a$ and $\b$ parameters. The polynomial $F_{n}(x,t;\a,\b)$ satisfies a fourth-order, bilinear equation -- see \eqref{eq:bilin} below.
These rational solutions have a similar appearance to rogue-wave solutions of the focusing nonlinear \sch\ (NLS) equation, 
cf.~\cite{refAASG,refACA,refAKA,refKAA11,refKAA12,refKAA13}
\begin{equation}
\mbox{\rm i} \psi_{t} + \psi_{xx}+\tfrac1{2}|\psi|^2 \psi =0,
\label{eq:fnls}
\end{equation} 
which also is a soliton equation solvable by inverse scattering \cite{refZS72}.
Benney and Newell \cite{refBN67} showed that the NLS equation arises universally in diverse applications in nonlinear dispersive waves.

In \S\ref{sec2}, we review the results in \cite{refCD16} concerning algebraically decaying rational solutions $u_n(x,t;\a,\b)$ of the Boussinesq equation \eqref{eq:bq}. In \S\ref{sec3}, we discuss conservation laws for the Boussinesq equation \eqref{eq:bq}, in particular showing that the constants of motion for the rational solutions $u_n(x,t;\a,\b)$ given by \eqref{ratun} are all zero.
In \S\ref{sec4}, we discuss integral relations for these rational solutions of the Boussinesq equation \eqref{eq:bq}. Specifically we prove the following result:
\begin{theorem}\label{thm1} Suppose that $u_n(x,t;\a,\b)$ is an algebraically decaying rational solution of the Boussinesq equation \eqref{eq:bq}
of the form \eqref{ratun}, then
\begin{align} \frac{1}{8\pi}\dimp u_n^2(x,t;\a,\b)\,\d x\,\d t &= \tfrac12n(n+1),\label{int:un2} \\ 
\frac{1}{8\pi}\dimp u_n^3(x,t;\a,\b)\,\d x\,\d t &= n(n+1).\label{int:un3}\end{align}
\end{theorem}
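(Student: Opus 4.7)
The plan is to exploit the Hirota bilinear form \eqref{eq:bilin} of the Boussinesq equation satisfied by $F_n$. Dividing that equation by $F_n^2$ and rewriting everything in terms of $u_n = 2(\ln F_n)_{xx}$ gives the pointwise identity
\[ \tfrac12 u_n^2 = \Delta \ln F_n - \tfrac13 (\ln F_n)_{xxxx}, \qquad \Delta = \partial_x^2 + \partial_t^2, \]
which is the crucial algebraic step: it linearises $u_n^2$ with respect to $\ln F_n$, reducing the double integral to boundary data at infinity.

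To establish \eqref{int:un2}, I would integrate the identity over the disk $D_R$ of radius $R$ and apply Green's theorem. The Laplacian term yields $\iint_{D_R}\Delta \ln F_n\,\d x\,\d t = \oint_{\partial D_R}(\partial_r \ln F_n)\,\d s$, while $\iint_{D_R}(\ln F_n)_{xxxx}\,\d x\,\d t = \oint_{\partial D_R}(\ln F_n)_{xxx}\,\d t$. Since $F_n$ is a polynomial of total degree $N = n(n+1)$, we have $\ln F_n \sim N \ln r + g(\theta)$ as $r \to \infty$, so $\partial_r \ln F_n \sim N/r$ and the first boundary integral tends to $2\pi N$. The second integrand is $O(r^{-3})$, giving a contribution of order $1/R^2 \to 0$. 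Absolute integrability of $u_n^2 = O(r^{-4})$ forces $\iint_{D_R} u_n^2 \to \dimp u_n^2\,\d x\,\d t$, and one concludes $\dimp u_n^2\,\d x\,\d t = 4\pi n(n+1)$, which is \eqref{int:un2}.

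For \eqref{int:un3} I would combine several ingredients. Introducing $v_n = 2(\ln F_n)_{xt}$ (so that $u_{n,t} = v_{n,x}$), the conservation law developed in \S\ref{sec3} says that the Hamiltonian density $\tfrac12 u^2 - \tfrac12 v^2 - \tfrac13 u^3 + \tfrac16 u_x^2$ has vanishing $x$-integral for rational solutions; integrating in $t$ produces a first linear relation among $\dimp u_n^3$, $\dimp v_n^2$, and $\dimp u_{n,x}^2$. Multiplying the pointwise identity above by $u_n$ and using the Monge--Amp\`ere divergence $(\ln F_n)_{xx}(\ln F_n)_{tt} - (\ln F_n)_{xt}^2 = \partial_x[(\ln F_n)_x(\ln F_n)_{tt}] - \partial_t[(\ln F_n)_x(\ln F_n)_{xt}]$ (whose integral vanishes at infinity because all factors decay sufficiently fast) yields a second, independent relation. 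A third relation, needed to close the system, comes from applying the same disk/divergence strategy to an auxiliary identity obtained by differentiating \eqref{eq:bilin} in $t$, again producing a boundary contribution controlled by $N = n(n+1)$. Combining the three delivers $\dimp u_n^3\,\d x\,\d t = 8\pi n(n+1)$.

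The principal obstacle is the cubic case: unlike the elegant one-identity reduction that handles $u_n^2$, there is no single Laplacian-type identity that converts $u_n^3$ into a boundary integral in one stroke. The strategy must therefore juggle multiple auxiliary integrals together with the \S\ref{sec3} conservation law, each cleanly controlled by the total degree $N = n(n+1)$ through the universal asymptotic $\partial_r \ln F_n \sim N/r$ on large circles.
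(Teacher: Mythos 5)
Your treatment of \eqref{int:un2} is correct and is essentially the paper's own argument: your pointwise identity $\tfrac12 u_n^2=\Delta\ln F_n-\tfrac13(\ln F_n)_{xxxx}$ is precisely the paper's relation $u^2=U_{tt}+U_{xx}-\tfrac13U_{xxxx}$ with $U=2\ln F_n$, and your Green's-theorem computation on a large disk (boundary term $2\pi n(n+1)$ from $\partial_r\ln F_n\sim n(n+1)/r$, the quartic term contributing $\mathcal{O}(R^{-2})$) is the disk version of the paper's calculation over the square $[-R,R]^2$; both rest on $F_n=(x^2+t^2)^{n(n+1)/2}+G_n$ with $G_n$ of degree at most $n(n+1)-2$.

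For \eqref{int:un3}, however, there is a genuine gap. Your first two ingredients are fine and match the paper: integrating the density of the fourth conservation law (your ``Hamiltonian density'' is $-\tfrac12 T_4$) in $t$ reproduces \eqref{eq4}, and your Monge--Amp\`ere divergence step is exactly Lemma~\ref{ref:lem1}, giving $\dimp u^3\,\d x\,\d t=\dimp\bigl(u^2+v^2+\tfrac13u_x^2\bigr)\d x\,\d t$. But writing $I_3=\dimp u^3$, $I_2=\dimp u^2$, $J=\dimp v^2$, $K=\dimp u_x^2$, these two relations only yield $K=15J-3I_2$ and $I_3=6J$: the system is underdetermined, and knowing $I_2=4\pi n(n+1)$ does not pin down $J$ or $I_3$. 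The third relation you invoke --- ``an auxiliary identity obtained by differentiating \eqref{eq:bilin} in $t$'' --- does not supply what is needed: dividing \eqref{eq:bilin} by $F_n^2$ gives exactly the identity you already used, so its $t$-derivative is just the $t$-derivative of that identity, and integrating it over the plane produces no new constraint among $I_2,I_3,J,K$. What actually closes the system (and what the paper uses) is the vanishing time-integrated flux of the \emph{third} conservation law, \eqref{bqcom23} with $k_3=0$: integrating $\tfrac23u^3+\tfrac12v^2-\tfrac12u^2-\tfrac16u_x^2+\tfrac13uu_{xx}$ over $x$ as well and integrating $uu_{xx}$ by parts gives $I_3=\tfrac34\,(I_2-J+K)$. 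Combined with your two relations this forces $K=I_3$, $J=\tfrac13I_2$ and hence $I_3=2I_2=8\pi n(n+1)$, which is \eqref{int:un3}. Without a third, independent relation of this kind your argument for the cubic integral is incomplete, even though the numerical value you target is correct.
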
 
Theorem \ref{thm1} shows a relationship between the integrals and the total degree of the polynomial $F_n(x,t;\a,\b)$ associated with the rational solution $u_n(x,t;\a,\b)$. An analogous result to \eqref{int:un2} has been conjectured for rogue-wave solutions of the NLS equation \eqref{eq:fnls} \cite{refAA15}.
In \S\ref{seckp} we discuss rational solutions of the \kp\ I (KPI) equation 
\begin{equation}\label{eq:kp1}
(v_\tau+6vv_\xi+v_{\xi\xi\xi})_\xi=3 v_{\eta\eta},
\end{equation}
specifically how the approach of Ablowitz \etal\ \cite{refACTV,refAV,refVA99} might elucidate the results obtained here.
In \S\ref{sec5} we discuss our results.

\section{\label{sec2}Rational solutions of the Boussinesq equation}
Clarkson and Kruskal \cite{refCK} showed that Boussinesq equation \eqref{eq:bq} has symmetry reductions to 
the first, second and fourth \p\ equations\ (\PI, \PII, \PIV).
Since \PII\ and \PIV\ themselves have rational solutions, symmetry reductions were used in \cite{refPAC08} to derive rational solutions of the Boussinesq equation \eqref{eq:bq}. Further more general rational solutions of  
\eqref{eq:bq} are also given in \cite{refPAC08}.
Unfortunately none of these rational solutions are bounded for all real $x$ and $t$, and so it is unlikely that they will have any physical significance.

However it is known that there are additional rational solutions of the Boussinesq equation \eqref{eq:bq} which do not arise from the above construction. For example, Ablowitz and Satsuma \cite{refAbSat} derived the rational solution
\beq\label{bq:rat1}
u(x,t)= 2\pderiv[2]{}{x}\ln(1+x^2+t^2)=\frac{4(1-x^2+t^2)}{(1+x^2+t^2)^2} ,
\eeq 
by taking a long-wave limit of the two-soliton solution, see also \cite{refTajMur91,refTajWat}. This solution is bounded for real $x$ and $t$, and tends to zero algebraically as $|x|\to\infty$ and $|t|\to\infty$. 

If in the Boussinesq equation \eqref{eq:bq}, we make the transformation \beq u(x,t)=2\pderiv[2]{}{x}\ln F(x,t),\eeq
then we obtain the bilinear equation \cite{refHirota73,refHirSat76}
\beq FF_{tt}-F_t^2 + FF_{xx}-F_x^2 - \tfrac13\left(FF_{xxxx} -4 F_xF_{xxx}+3F_{xx}^2\right)=0,\label{eq:bilin}
\eeq
which can be written in the form
\beq(\D_t^2 + \D_x^2-\tfrac13\D_x^4)F\cdot F=0,\eeq
where $\D_x$ and $\D_t$ are Hirota operators 
\begin{align}\label{hirotaop}
\D_x^m \,\D_t^n F(x,t)\cdot F(x,t)&=\left[\left(\pderiv{}{x}-\pderiv{}{x'}\right)^m\left(\pderiv{}{t}-\pderiv{}{t'}\right)^n F(x,t)F(x',t')\right]_{x'=x,t'=t}.
\end{align}

\def\fig#1{\includegraphics[width=5cm]{#1}}
\begin{figure}[ht]
\[ \begin{array}{c@{\quad}c@{\quad}c} 
\fig{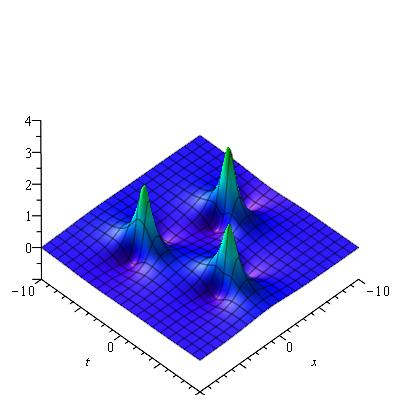}& \fig{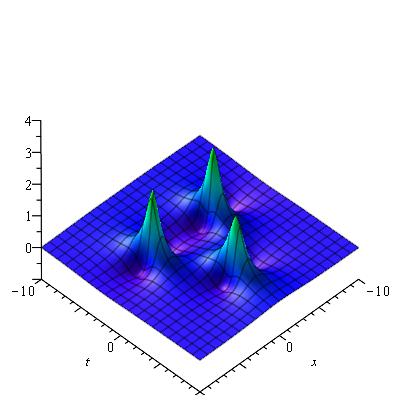}& \fig{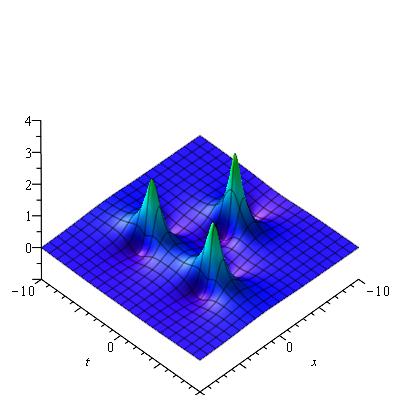}\\
\a=\b=100 &  \a=100, \b=0  & \a=0, \b=100 \\
\fig{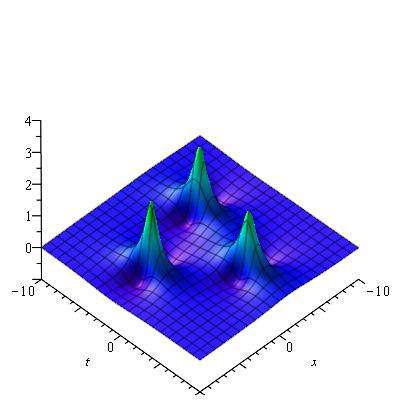}& \fig{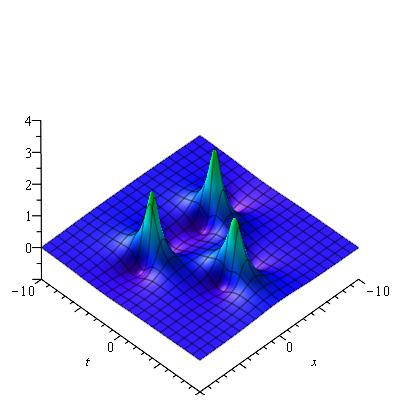}& \fig{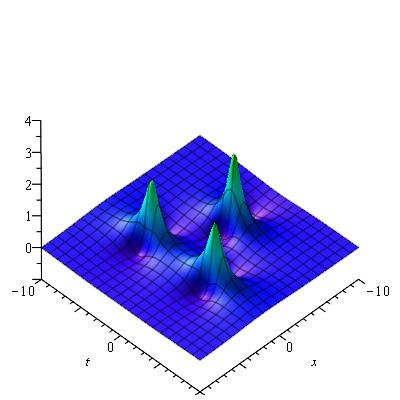}\\
\a=100, \b=-100 &  \a=100, \b=10  & \a=10, \b=100 
\end{array}\]
\caption{\label{fig21}Plots of the solution $\uu_2(x,t;\a,\b)$ for various choices of $\a$ and $\b$.}
\end{figure}

\begin{figure}[ht]
\[ \begin{array}{c@{\quad}c@{\quad}c} 
\fig{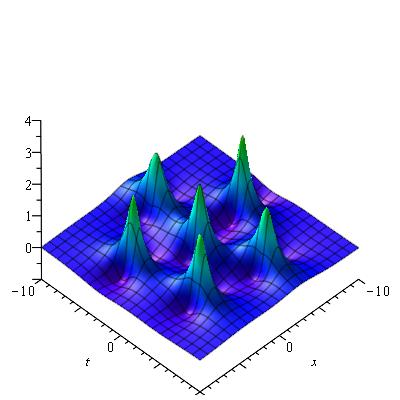}& \fig{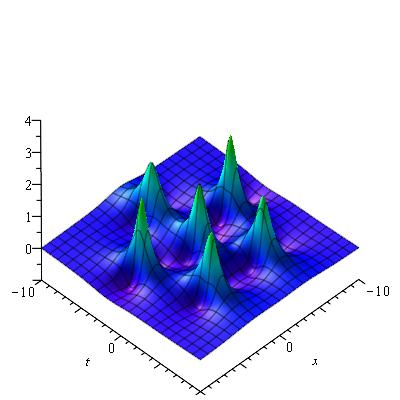}& \fig{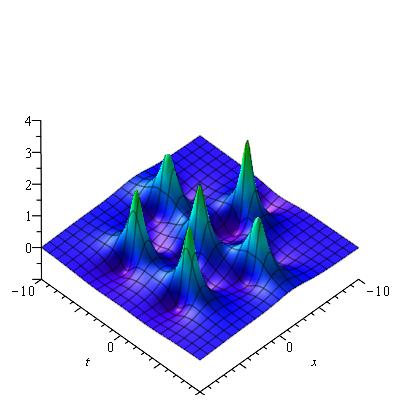}\\
\a=\b=10^4 &  \a=10^4, \b=0  & \a=0, \b=10^4 \\
\fig{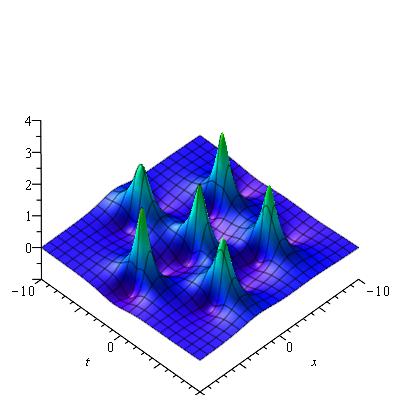}& \fig{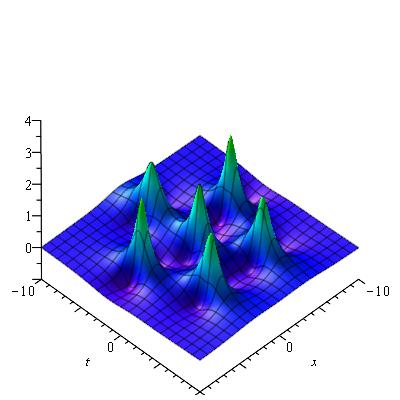}& \fig{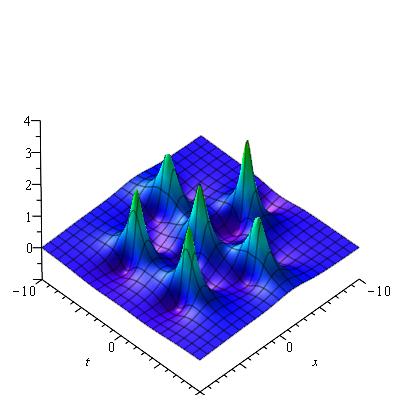}\\
\a=10^4, \b=-10^4  & \a=10^4, \b=10^2 &  \a=10^2, \b=10^4
\end{array}\]
\caption{\label{fig22}Plots of the solution $\uu_3(x,t;\a,\b)$ for various choices of $\a$ and $\b$.}
\end{figure}

Since the Boussinesq equation \eqref{eq:bq} admits the rational solution \eqref{bq:rat1}, Clarkson and Dowie \cite{refCD16} sought solutions in the form
\beq\label{rat:ansatz1}
{u}_n(x,t) = 2\pderiv[2]{}{x}\ln f_n(x,t),\qquad n\geq1,\eeq
where $F_{n}(x,t)$ is a polynomial of degree $n(n+1)$ in $x$ and $t$, with total degree $n(n+1)$. In particular
\beq\label{rat:ansatz2}
f_n(x,t) = \sum_{m=0}^{n(n+1)/2}\sum_{j=0}^ma_{j,m}x^{2j}t^{2(m-j)},
\eeq
with $a_{j,m}$ constants, which are determined using \eqref{eq:bilin} and equating powers of $x$ and $t$. Using this ansatz the following polynomials were obtained:
\begin{subequations}\label{bq:fn}\begin{align} 
f_1(x,t)&=x^2+t^2+1,\\
f_2(x,t)&={x}^{6}+ \left( 3{t}^{2}+\tfrac{25}3 \right) {x}^{4}+ \left( 3{t}^{4}+30{t}^{2}-\tfrac{125}9 \right) {x}^{2}+{t}^{6}+\tfrac{17}3{t}^{4}+\tfrac{475}9{t}^{2}+\tfrac{625}9,\\
f_3(x,t)&={x}^{12}+ \left( 6{t}^{2}+\tfrac{98}3 \right) {x}^{10}+ \left( 15{t}^{4}+230{t}^{2}+\tfrac{245}3 \right) {x}^{8}
+ \left( 20{t}^{6}+\tfrac{1540}3{t}^{4}+\tfrac{18620}9{t}^{2}+{\tfrac {75460}{81}} \right) {x}^{6}\nonumber\\
&\phantom{={x}^{12}\ }+ \left( 15{t}^{8}+\tfrac{1460}3{t}^{6}+\tfrac{37450}9{t}^{4}+\tfrac{24500}3{t}^{2}-{\tfrac {5187875}{243}}\right) {x}^{4} \nonumber\\
&\phantom{={x}^{12}\ }+ \left( 6{t}^{10}+190{t}^{8}+\tfrac{35420}9{t}^{6}-\tfrac{4900}9{t}^{4}+\tfrac{188650}{27}{t}^{2}+{\tfrac {159786550}{729}} \right) {x}^{2}\nonumber\\ 
&\phantom{={x}^{12}\ }+{t}^{12}+\tfrac{58}3{t}^{10}+\tfrac{1445}3{t}^{8}+{\tfrac {798980}{81}}{t}^{6}+{\tfrac {16391725}{243}}{t}^{4}
+{\tfrac {300896750}{729}}{t}^{2}+{\tfrac {878826025}{6561}}.
\end{align}\end{subequations}
(The lengthy polynomials $f_4(x,t)$ and $f_5(x,t)$ are also given in \cite{refCD16}.)

Clarkson and Dowie \cite{refCD16} further showed that the Boussinesq equation \eqref{eq:bq} possesses generalised rational solutions of the form
\beq \uu_n(x,t;\a,\b) = 2\pderiv[2]{}{x}\ln \ff_n(x,t;\a,\b),\label{ungen}\eeq
for $n\geq1$, with $\a$ and $\b$ arbitrary constants, where
\beq\label{bq:untilde}
\ff_{n+1}(x,t;\a,\b)= f_{n+1}(x,t) +2\a tP_{n}(x,t)+2\b xQ_{n}(x,t)+\big(\a^2+\b^2\big)f_{n-1}(x,t),\eeq
with $f_n(x,t)$ given by \eqref{bq:fn}, and
$P_{n}(x,t)$ and $Q_{n}(x,t)$ polynomials of degree $n(n+1)$ in $x$ and $t$. Specifically
\begin{align}\label{bq:PQform}
P_n(x,t) &= \sum_{m=0}^{n(n+1)/2}\sum_{j=0}^m b_{j,m}x^{2j}t^{2(m-j)},\qquad
Q_n(x,t) = \sum_{m=0}^{n(n+1)/2}\sum_{j=0}^m c_{j,m}x^{2j}t^{2(m-j)},
\end{align}
with the constants $b_{j,m}$ and $c_{j,m}$ determined by equating powers of $x$ and $t$.
By substituting \eqref{bq:untilde} into the bilinear equation \eqref{eq:bilin} with $f_j(x,t)$, for $j=1,2,\ldots,5$, given by \eqref{bq:fn}, then 
it is shown in \cite{refCD16} that
\begin{subequations}\label{bq:PQ}\begin{align}
P_1(x,t)&=3x^2-t^2+\tfrac53,\\
Q_1(x,t)&=x^2-3 t^2-\tfrac13,\\ 
P_2(x,t)&=5x^6 -\left(5t^2-35\right)x^4-\left(9t^4+\tfrac{190}3t^2+\tfrac{665}{9}\right)x^2+t^6-\tfrac73t^4-\tfrac{245}9t^2+\tfrac{18865}{81},\\
Q_2(x,t)&=x^6-\left(9t^2-\tfrac{13}3\right)x^4-\left(5t^4+\tfrac{230}3t^2+\tfrac{245}9\right)x^2 +5t^6+15t^4+\tfrac{535}9t^2+\tfrac{12005}{81},
\end{align}\end{subequations}
with $\a$ and $\b$ arbitrary constants; the polynomials $P_3(x,t)$, $Q_3(x,t)$, $P_4(x,t)$ and $Q_4(x,t)$ are given in \cite{refCD16}.
The polynomials have the form
\beq F_n(x,t;\a,\b)= \big(x^2+t^2\big)^{n(n+1)/2} + G_{n}(x,t;\a,\b),\label{FnGn}\eeq
where $G_n(x,t;\a,\b)$ is a polynomial of degree $(n+2)(n-1)$ in both $x$ and $t$. 
In Figures \ref{fig21} and \ref{fig22}, plots of $\uu_2(x,t;\a,\b)$ and $\uu_3(x,t;\a,\b)$ are given for various choices of $\a$ and $\b$, respectively.

\section{\label{sec3}Conservation laws} 
A conservation law is an equation of the form 
\beq\pderiv{T}{t} + \pderiv{X}{x}= 0,\label{conlaw}\eeq
where $T(x,t)$ is the \textit{conserved density} and $X(x,t)$ the \textit{associated flux}. The integral \beq\imp T(x,t)\,\d x=c,\label{com1}\eeq with $c$ a constant, is called a \textit{constant of motion}, with $t$ interpreted as a timelike variable. It follows that \beq\imp X(x,t)\,\d t=k,\label{com2}\eeq with $k$ also a constant.

In order to study conservation laws for the Boussinesq equation \eqref{eq:bq} we cast it as the system
\begin{subequations}\label{eq:bqs}\begin{align}
u_t&+v_x=0,\label{eq:bqsa}\\
v_t&+(u^2)_{x}-u_{x}+\tfrac13 u_{xxx}=0.\label{eq:bqsb}
\end{align}\end{subequations}
If $u(x,t)$ has the form $u(x,t)=2[\ln F(x,t)]_{xx}$, 
where $F(x,t)$ satisfies the bilinear equation \eqref{eq:bilin} then equation \eqref{eq:bqsa}, shows that 
\beq v(x,t)=-2\frac{\partial^2}{\partial x\partial t}\ln F(x,t).\eeq
The first few conserved densities $T_j(x,t)$ and associated fluxes $X_j(x,t)$ for the system \eqref{eq:bqs} are 
\begin{align*}
T_1 (x,t)&= u, && X_1(x,t)=v,\\
T_2(x,t)&=v, && X_2(x,t)=u^2-u+\tfrac13 u_{xx},\\
T_3(x,t)&= uv, && X_3(x,t)= \tfrac23 u^3+\tfrac12 v^2-\tfrac12 u^2-\tfrac16 u_x^2+\tfrac13 uu_{xx},\\
T_4(x,t)&=\tfrac23 u^3+v^2-u^2-\tfrac13u_x^2, &&
X_4(x,t)=2u^2v-2uv+\tfrac23 vu_{xx}-\tfrac23 u_xv_x;
\end{align*}
see Hereman \etal\ \cite{refHAEHH} for details.
Hence the first few constants of the motion for the system \eqref{eq:bqs} are
\begin{subequations}\label{bqcom1}\begin{align} 
&\imp u(x,t)\,\d x=c_1, \label{bqcom11}\\
&\imp v(x,t)\,\d x=c_2,\label{bqcom12}\\ 
&\imp u(x,t)v(x,t)\,\d x=c_3,\label{bqcom13}\\
&\imp \left(\tfrac23 u^3+v^2-u^2-\tfrac13u_x^2\right)\d x=c_4,\label{bqcom14}
\end{align}\end{subequations}
where $c_1$, $c_2$, $c_3$ and $c_4$ are constants.
The integral \eqref{bqcom13} corresponds to the conservation of momentum and \eqref{bqcom14} to the conservation of energy.
Further from the associated fluxes we have 
\begin{subequations}\label{bqcom2}\begin{align} 
&\imp v(x,t)\,\d t=k_1,\label{bqcom21}\\
& \imp \left(u^2-u+\tfrac13 u_{xx}\right)\d t=k_2,\label{bqcom22}\\ 
&\imp \left( \tfrac23 u^3+\tfrac12 v^2-\tfrac12 u^2-\tfrac16 u_x^2+\tfrac13 uu_{xx}\right)\d t=k_3,\label{bqcom23}\\
&\imp \left(2u^2v-2uv+\tfrac23 vu_{xx}-\tfrac23 u_xv_x\right)\d t=k_4,\label{bqcom24}
\end{align}\end{subequations}
where $k_1$, $k_2$, $k_3$ and $k_4$ are constants.

It is easily shown that for the algebraically decaying rational solutions of the Boussinesq equation \eqref{eq:bq} described in \S\ref{sec2}, then $c_j=0$ and $k_j=0$, for $j=1,\ldots,4$. 

\section{\label{sec4}Integral relations of rational solutions}
In this section we examine Theorem \ref{thm1}. The results hold for the generalised rational solutions $\uu_n(x,t;\a,\b)$ given by \eqref{ungen}, though in this section we will suppress the explicit dependence of the parameters $\a$ and $\b$.

\subsection{Integral of $u_n^2(x,t)$.}
First we shall consider the integral of $u_n^2(x,t)$, i.e.\ result \eqref{int:un2}.
Setting $u=U_{xx}$ in the Boussinesq equation \eqref{eq:bq} and then integrating twice w.r.t.\ $x$, assuming that $u$ and its derivatives vanish sufficiently rapidly as $|x|\to\infty$, yields
\beq u^2=U_{tt}+U_{xx}-\tfrac13U_{xxxx}.\label{eq:bq2}\eeq
We integrate \eqref{eq:bq2} w.r.t.\ $x$ and $t$, for $-R<x<R$ and $-R<t<R$, with $R$ large, but finite, with a view to later letting $R\to\infty$.
This gives
\begin{align}
\dintR u^2(x,t)\,\d x\,\d t &= \dintR \left\{ U_{tt}(x,t) +U_{xx}(x,t)-\tfrac13 U_{xxxx}(x,t)\right\}\d x\,\d t.\nonumber
\end{align}
The rationale for considering $(x,t)\in[-R,R]\times[-R,R]$, for $R$ large but finite, rather than considering $(x,t)\in\mathbb{R}^2$ from the outset is that for the rational solutions given in \S\ref{sec2}, $U_{xx}(x,t)\not\in L^1(\mathbb{R}^2)$ and $U_{tt}(x,t)\not\in L^1(\mathbb{R}^2)$, though $U_{xx}(x,t)+U_{tt}(x,t)\in L^1(\mathbb{R}^2)$.

\begin{figure}
\[ \begin{array}{c@{\quad}c@{\quad}c} 
\includegraphics[width=5cm]{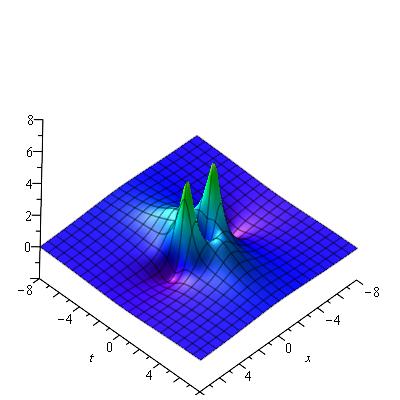}& \includegraphics[width=5cm]{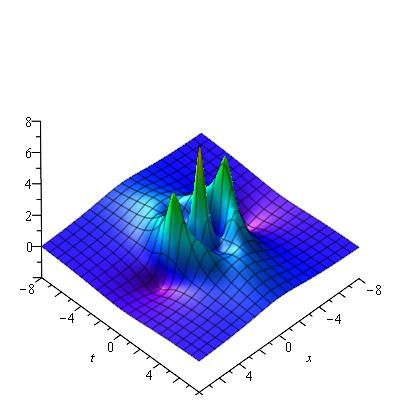}& \includegraphics[width=5cm]{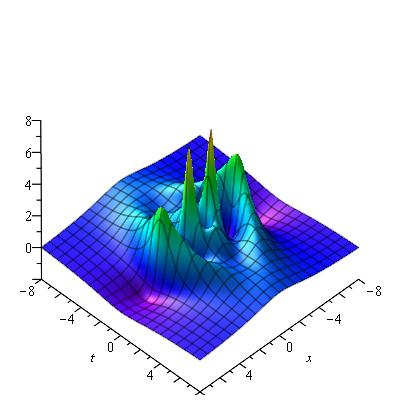}\\
u_2(x,t;0,0) &  u_3(x,t;0,0)   & u_4(x,t;0,0)\\
\includegraphics[width=5cm]{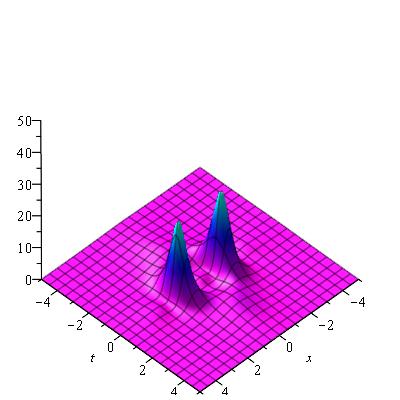}& \includegraphics[width=5cm]{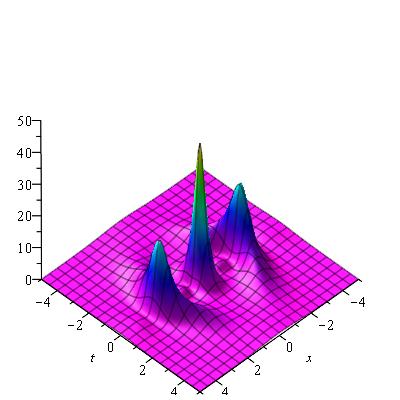}& \includegraphics[width=5cm]{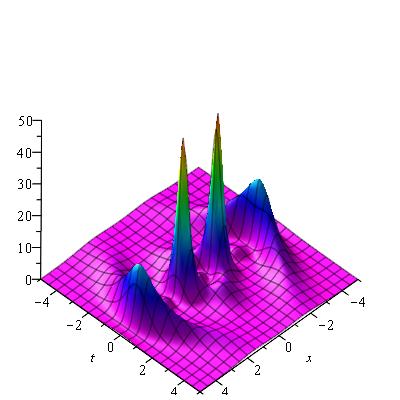}\\
u_2^2(x,t;0,0) &  u_3^2(x,t;0,0)   & u_4^2(x,t;0,0) \\
\includegraphics[width=5cm]{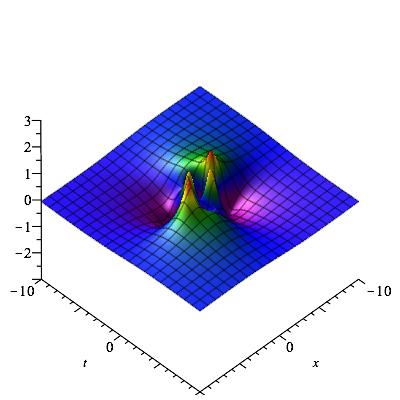}&
 \includegraphics[width=5cm]{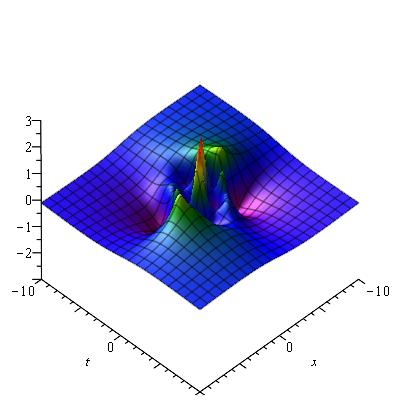}& \includegraphics[width=5cm]{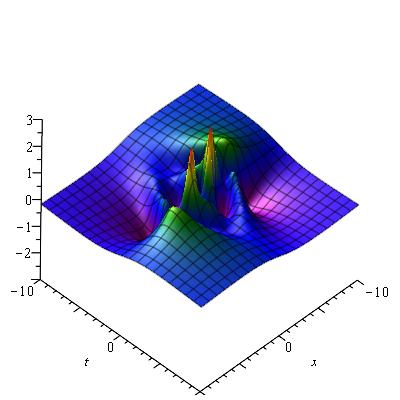}\\
v_2(x,t;0,0) &  v_3(x,t;0,0)   & v_4(x,t;0,0)  \\
\includegraphics[width=5cm]{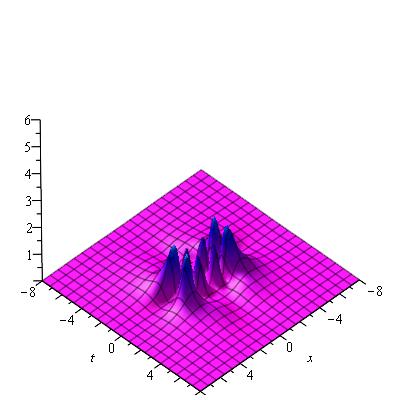}& \includegraphics[width=5cm]{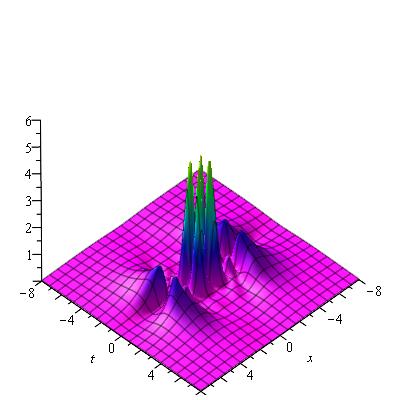}& \includegraphics[width=5cm]{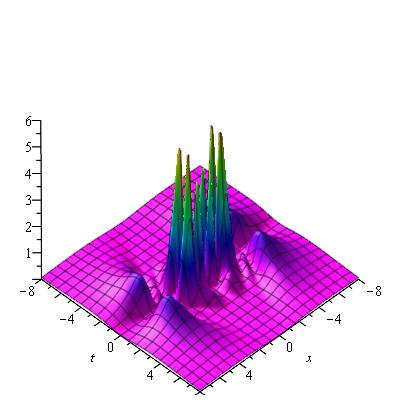}\\
v_2^2(x,t;0,0) &  v_3^2(x,t;0,0)   & v_4^2(x,t;0,0) 
\end{array}\]
\caption{\label{fig41}Plots of the solutions ${u}_j(x,t;0,0)$, $u_j^2(x,t;0,0)$, ${v}_j(x,t;0,0)$ and $v_j^2(x,t;0,0)$ for $j=2,3,4$.}
\end{figure}

For the rational solutions described in \S\ref{sec2}, $U(x,t) =2\ln F_n(x,t)$, where 
\beq F_n(x,t)= (x^2+t^2)^{n(n+1)/2}+G_n(x,t),\label{def:Fn}\eeq
with $G_n(x,t)$ a polynomial of degree $
(n+2)(n-1)$ in both $x$ and $t$. Therefore
\begin{align*}
\frac{1}{8\pi}\dintR \left\{U_{xx} +U_{tt}\right\}\d x\,\d t &=\frac{1}{4\pi}\dintR
\left\{ \left(\frac{F_{n,x}(x,t)}{F_n(x,t)}\right)_x+\left(\frac{F_{n,t}(x,t)}{F_n(x,t)}\right)_t\right\}\d x\,\d t\\
&=\frac{1}{4\pi}\intR\left\{ \frac{F_{n,x}(R,t)}{F_n(R,t)}-\frac{F_{n,t}(-R,t)}{F_n(-R,t)}\right\}\d t\\&\qquad
+\frac{1}{4\pi}\intR\left\{ \frac{F_{n,x}(x,R)}{F_n(x,R)}-\frac{F_{n,t}(x,-R)}{F_n(x,-R)}\right\}\d x,
\end{align*}
where the order of integration has been reversed for the second integral.
Next consider
\begin{align*}
\frac{F_{n,x}(R,t)}{F_n(R,t)} &= \frac{n(n+1)R(R^2+t^2)^{n(n+1)/2-1}+G_{n,x}(R,t)}{(R^2+t^2)^{n(n+1)/2}+G_n(R,t)}\\
&= \frac{n(n+1)R}{R^2+t^2} \left\{1+ \frac{G_{n,x}(R,t)}{(R^2+t^2)^{n(n+1)/2}}\right\}^{-1} 
+\frac{G_{n,x}(R,t)}{(R^2+t^2)^{n(n+1)/2}+G_n(R,t)},
\end{align*}
then letting $t=\tau R$ gives 
\begin{align*}
\frac{1}{4\pi}\intR \left\{ \frac{F_{n,x}(R,t)}{F_n(R,t)}-\frac{F_{n,x}(-R,t)}{F_n(-R,t)}\right\}\d t &=\frac{R}{4\pi}\int_{-1}^{1}
\left\{ \frac{F_{n,x}(R,R\tau)}{F_n(R,R\tau)}-\frac{F_{n,x}(-R,R\tau)}{F_n(-R,R\tau)}\right\}\d \tau\\ 
&=\frac{n(n+1)}{2\pi}\int_{-1}^{1} \frac{1}{1+\tau^2}\left\{1+\O\left(R^{-2}\right)\right\}\d \tau\\
&=\tfrac14{n(n+1)}\left\{1+\O\left(R^{-2}\right)\right\},
\end{align*} since \[\int_{-1}^{1} \frac{\d\,\tau}{1+\tau^2}=\big[\arctan(\tau)\big]_{-1}^1=\tfrac12\pi.\]
Similarly
\begin{align*}
\frac{F_{n,t}(x,R)}{F_n(x,R)}
&= \frac{n(n+1)R}{x^2+R^2} \left\{1+ \frac{G_{n,t}(x,R)}{(x^2+R^2)^{n(n+1)/2}}\right\}^{-1} 
+\frac{G_{n,t}(x,R)}{(x^2+R^2)^{n(n+1)/2}+G_n(x,R)}
\end{align*}
and letting $x=\xi R$ gives 
\begin{align*}
\frac{1}{4\pi}\intR \left\{ \frac{F_{n,t}(x,R)}{F_n(x,R)}-\frac{F_{n,t}(x,-R)}{F_n(x,-R)}\right\}\d x 
&=\frac{n(n+1)}{2\pi}\int_{-1}^{1} \frac{1}{1+\xi ^2}\left\{1+\O\left(R^{-2}\right)\right\}\d \xi \\
&=\tfrac14{n(n+1)}\left\{1+\O\left(R^{-2}\right)\right\}
\end{align*}
Hence we have shown that
\[\frac{1}{8\pi}\dintR \left\{u^2(x,t)+\tfrac13U_{xxxx}(x,t)\right\} \d x\,\d t=\tfrac12{n(n+1)}\left\{1+\O\left(R^{-2}\right)\right\},\]
We note that \[\lim_{R\to\infty}\dintR U_{xxxx}(x,t)\,\d x\,\d t=0,\]
since $\lim_{|R|\to\infty}U_{xxx}(R,t)=0$, and so in the limit as $R\to\infty$ we obtain the following result. 
\begin{theorem}\label{ref:thm1}
If $u_n(x,t)$ is an algebraically decaying rational solution of the Boussinesq equation \eqref{eq:bq} given by \eqref{ungen}, then
\beq\frac{1}{8\pi}\dimp u_n^2(x,t)\,\d x\,\d t =\tfrac12{n(n+1)}.\label{thm41}\eeq
\end{theorem}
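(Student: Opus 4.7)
The plan is to convert $u_n^2$ into a combination of terms that can be evaluated by boundary integration. Setting $u_n=U_{xx}$ in \eqref{eq:bq} and integrating twice in $x$, using the rapid decay of $u_n$ and its $x$-derivatives, gives
\[ u_n^2 = U_{tt}+U_{xx}-\tfrac13 U_{xxxx}. \]
With $U=2\ln F_n$, the first two terms are pure $x$- and $t$-divergences, so on any square $[-R,R]\times[-R,R]$ they reduce to edge integrals of $F_{n,x}/F_n$ and $F_{n,t}/F_n$ respectively, while the $U_{xxxx}$ term drops out in the limit because $U_{xxx}\to0$ on the boundary.

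A subtle point is that, for the rational solutions in question, neither $U_{xx}$ nor $U_{tt}$ is in $L^1(\mathbb{R}^2)$; only their sum is. I would therefore work on the finite square $[-R,R]\times[-R,R]$, apply the fundamental theorem of calculus to each piece \emph{separately}, and only send $R\to\infty$ after the two divergent contributions have been paired up. The boundary integrals arising take the form $\intR [F_{n,x}/F_n]_{x=\pm R}\,\d t$ and the analogous expressions on the edges $t=\pm R$.

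The key asymptotics come from the structural decomposition \eqref{FnGn}, namely $F_n=(x^2+t^2)^{n(n+1)/2}+G_n(x,t;\a,\b)$ with $\deg G_n=(n+2)(n-1)<n(n+1)$. On the edge $x=R$, the rescaling $t=R\tau$ with $\tau\in[-1,1]$ gives
\[ \frac{F_{n,x}(R,R\tau)}{F_n(R,R\tau)}=\frac{n(n+1)R}{R^2(1+\tau^2)}\bigl\{1+\O(R^{-2})\bigr\}, \]
so the pair of $x$-edges contributes
\[ \frac{n(n+1)}{2\pi}\int_{-1}^{1}\frac{\d\tau}{1+\tau^2}+\O(R^{-2})=\tfrac14 n(n+1)+\O(R^{-2}) \]
after dividing by $4\pi$. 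The $x\leftrightarrow t$ symmetry of the leading monomial $(x^2+t^2)^{n(n+1)/2}$ gives an identical $\tfrac14 n(n+1)$ from the two $t$-edges, and the $U_{xxxx}$ contribution vanishes in the limit. Summing yields $\tfrac12 n(n+1)$, as claimed.

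The main obstacle I anticipate is the uniform control of the $O(R^{-2})$ remainders. One must check that $G_n$ and its first derivatives are genuinely dominated by $(x^2+t^2)^{n(n+1)/2}$ along the rescaled boundary, uniformly in $\tau\in[-1,1]$ and in the parameters $\a,\b$; the strict inequality $(n+2)(n-1)<n(n+1)$ supplies the needed margin, but for the generalised solutions of \eqref{bq:untilde} one should verify that the $\a tP_n$, $\b xQ_n$ and $(\a^2+\b^2)f_{n-1}$ corrections are all of strictly lower total degree than the leading monomial, so that they do not disturb the leading-order computation of the edge integrals.
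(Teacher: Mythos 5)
Your proposal is correct and follows essentially the same route as the paper's own argument: integrate the rewritten equation $u_n^2=U_{tt}+U_{xx}-\tfrac13U_{xxxx}$ over $[-R,R]\times[-R,R]$, reduce the divergence terms to edge integrals of $F_{n,x}/F_n$ and $F_{n,t}/F_n$, use the decomposition \eqref{FnGn} with the rescalings $t=R\tau$, $x=R\xi$ to extract $\tfrac14 n(n+1)$ from each pair of edges, and discard the $U_{xxxx}$ term in the limit. Your closing remarks on uniform control of the $\O(R^{-2})$ remainders and on the lower total degree of the $\a tP_n$, $\b xQ_n$ and $(\a^2+\b^2)f_{n-1}$ corrections are sensible refinements of points the paper treats implicitly.
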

For $n=2$, the result \eqref{thm41} equals the number of peaks, i.e.\ $3$, as shown in Figure \ref{fig21}, and for $n=3$, it equals the number of peaks, i.e.\  $6$, as shown in Figure \ref{fig22}.

In Figure \ref{fig41}, plots are given of the solutions ${u}_j(x,t)$, $u_j^2(x,t)$, ${v}_j(x,t)$ and $v_j^2(x,t)$ for $j=2,3,4$, with $\a=\b=0$.

\subsection{Integral of $u_n^3(x,t)$.}
Now we shall consider the integral of $u_n^3(x,t)$, i.e.\ result \eqref{int:un3}. From the third conservation law
\beq(uv)_t+\left(\tfrac23 u^3+\tfrac12 v^2-\tfrac12u^2-\tfrac16 u_x^2+\tfrac13 uu_{xx}\right)_{x}=0,\label{conlaw3}\eeq
we have
\[\imp\left(\tfrac23 u^3+\tfrac12 v^2-\tfrac12 u^2-\tfrac16 u_x^2+\tfrac13 uu_{xx}\right) \d t=0;\]
recall \eqref{bqcom23} with $k_3=0$. Integrating this result w.r.t.\ $x$ and interchanging the order of integration gives
\begin{align}\dimp u^3\,\d x\,\d t &= \dimp\left(\tfrac34 u^2-\tfrac34 v^2+\tfrac14 u_x^2-\tfrac12 uu_{xx}\right) \d x\,\d t.\nonumber\end{align}
One integration by parts yields
\begin{align}\dimp u^3\,\d x\,\d t&= \tfrac34\dimp\left( u^2- v^2+ u_x^2\right)\d x\,\d t.\label{eq3}
\end{align}
From the fourth conservation law
\beq\left(\tfrac23 u^3+v^2-u^2-\tfrac13u_x^2\right)_t+
\left(2u^2v-2uv+\tfrac23 vu_{xx}-\tfrac23 u_xv_x\right)_x=0,\label{conlaw4}\eeq
we have
\[\imp \left(\tfrac23 u^3+v^2-u^2-\tfrac13u_x^2\right)\d x=0;\]
recall \eqref{bqcom14} with $c_4=0$. Integrating this w.r.t.\ $t$ gives
\begin{align}\dimp u^3\,\d x\,\d t &= \dimp\left(\tfrac32 u^2-\tfrac32 v^2+\tfrac12u_x^2\right)\d x\,\d t.\label{eq4}
\end{align}
Therefore equations \eqref{eq3} and \eqref{eq4} give
\begin{subequations}\label{eq34ab}\beq\dimp u^3(x,t)\,\d x\,\d t = \dimp u_x^2(x,t)\,\d x\,\d t, \label{eq34a}
\eeq and \beq
\dimp u^3(x,t)\,\d x\,\d t=3 \dimp\left\{ u^2(x,t)-v^2(x,t)\right\}\d x\,\d t.\label{eq34b} \eeq\end{subequations}
From the Boussinesq equation \eqref{eq:bq}, we have
\[ u^2=u+U_{tt}-\tfrac13u_{xx},\]
where $u=U_{xx}$, so
\begin{align*}\dimp u^3\,\d x\,\d t &= \dimp\left(u^2+u\,U_{tt}-\tfrac13 uu_{xx}\right)\d x\,\d t.\end{align*}
Using integration by parts gives
\begin{align}\dimp u^3\,\d x\,\d t&= \dimp\left(u^2+u\,U_{tt}+\tfrac13 u_{x}^2\right)\d x\,\d t,
\end{align}
and so from \eqref{eq34ab} we obtain
\begin{subequations}\label{eq11ab}\beq\dimp u^3\,\d x\,\d t =\tfrac32\dimp\left(u^2+u\,U_{tt}\right)\d x\,\d t, \eeq 
and 
\beq\dimp u^3\,\d x\,\d t= 3\dimp\left(v^2+u\,U_{tt}\right)\d x\,\d t.\eeq\end{subequations}
Consequently, we see that
\beq \dimp u^2\,\d x\,\d t=\dimp \left(2v^2+u\,U_{tt}\right)\d x\,\d t.\label{eq10b}\eeq

\begin{lemma} \label{ref:lem1} Suppose that $u(x,t)$ and $v(x,t)$ are solutions of the system \eqref{eq:bqs}, and $u(x,t)=U_{xx}(x,t)$ with
\[\lim_{|x|\to\infty}U_x(x,t)=0,\qquad\lim_{|t|\to\infty}U_x(x,t)=0,\] then 
\beq \dimp v^2(x,t)\,\d x\,\d t = \dimp u(x,t)\,U_{tt}(x,t)\,\d x\,\d t .\label{eq11}\eeq
\end{lemma}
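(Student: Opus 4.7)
My plan is to first express $v$ explicitly in terms of $U$, and then convert $\iint v^2\,\d x\,\d t$ into $\iint u\,U_{tt}\,\d x\,\d t$ via two successive integrations by parts, using the finite-box strategy employed in the proof of Theorem \ref{ref:thm1}. The starting point is equation \eqref{eq:bqsa}: since $u=U_{xx}$, it reads $v_x=-U_{xxt}=-(U_{xt})_x$, so integrating in $x$ and using the hypothesis $U_x\to 0$ as $|x|\to\infty$ (which, differentiated in $t$, forces $U_{xt}\to 0$) yields $v=-U_{xt}$, after absorbing the $t$-dependent integration constant, which must vanish for the rational solutions under consideration since $v$ itself decays at infinity.

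Next I would write $v^2=U_{xt}^2$, integrate over the finite box $[-R,R]\times[-R,R]$, and integrate by parts once in $x$ and once in $t$:
\begin{align*}
\dintR U_{xt}^2\,\d x\,\d t
&= \intR\!\Big[U_t\, U_{xt}\Big]_{x=-R}^{x=R}\d t \;-\; \dintR U_t\, U_{xxt}\,\d x\,\d t\\
&= \intR\!\Big[U_t\, U_{xt}\Big]_{x=-R}^{x=R}\d t \;-\; \intR\!\Big[U_t\, u\Big]_{t=-R}^{t=R}\d x \;+\; \dintR U_{tt}\,u\,\d x\,\d t,
\end{align*}
using $U_{xx}=u$ in the last step. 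Once the two boundary contributions are shown to vanish as $R\to\infty$, the interior double integral converges to the right-hand side of \eqref{eq11} and the lemma follows.

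The main obstacle, as in the proof of Theorem \ref{ref:thm1}, is controlling the boundary integrals in the limit $R\to\infty$. I would exploit the explicit asymptotic form \eqref{FnGn}: since $F_n=(x^2+t^2)^{n(n+1)/2}+G_n$ with $\deg G_n=(n+2)(n-1)$, a direct expansion of the logarithmic derivatives of $U=2\ln F_n$ gives $U_t=\O(r^{-1})$, $U_{xt}=\O(r^{-2})$ and $u=U_{xx}=\O(r^{-2})$ uniformly for $r^2=x^2+t^2$ large, so that each boundary integrand is bounded by $C(R^2+s^2)^{-3/2}$ with $s$ the remaining variable. Rescaling $t=R\tau$ on the vertical sides and $x=R\xi$ on the horizontal sides, exactly as in the computation preceding \eqref{thm41}, shows that both boundary contributions are $\O(R^{-2})$, hence they vanish in the limit $R\to\infty$, completing the argument.
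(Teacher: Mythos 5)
Your argument is correct, and the engine is the same as in the paper's proof: integration by parts converting $\dimp U_{xt}^2\,\d x\,\d t$ into $\dimp U_{xx}\,U_{tt}\,\d x\,\d t$. The bookkeeping, however, is organised differently. The paper integrates by parts once in \emph{each} double integral, writing $\imp U_{xt}^2\,\d t=-\imp U_xU_{xtt}\,\d t$ and $\imp U_{xx}U_{tt}\,\d x=-\imp U_xU_{xtt}\,\d x$, and then equates the two double integrals by interchanging the order of integration; the discarded boundary terms $U_xU_{xt}$ and $U_xU_{tt}$ vanish by exactly the stated hypotheses on $U_x$. You instead perform both integrations by parts on the $v^2$ integral alone, over the finite box $[-R,R]^2$, and dispose of the boundary terms through the explicit decay rates $U_t=\O(r^{-1})$, $U_{xt}=\O(r^{-2})$, $U_{xx}=\O(r^{-2})$ obtained from \eqref{FnGn}, in the spirit of the proof of Theorem \ref{ref:thm1}. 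Your route buys quantitative $\O(R^{-2})$ control of the boundary contributions and absolute convergence of every integral involved, but it is tied to the rational solutions: as written it proves the lemma only for solutions of the form \eqref{def:Fn}, whereas the paper's version runs under the lemma's decay hypotheses alone. Two minor points: the identification $v=-U_{xt}$ is already available from \S\ref{sec3} and need not be rederived; and your shortcut of deducing $U_{xt}\to0$ by differentiating the hypothesis $U_x\to0$ is not a valid inference in general (decay of a function does not imply decay of its derivative), though it is harmless here since the decay of $U_{xt}$ for the rational solutions can be read off directly from \eqref{FnGn}.
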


\begin{proof}
Since $u=U_{xx}$ and $v=-U_{xt}$ then
\[\dimp v^2(x,t)\,\d x\,\d t = \imp\left(\imp U_{xt}^2\,\d t\right)\d x =-\imp\left(\imp U_xU_{xtt}\,\d t\right)\d x,\]
 and
\[\dimp u(x,t)\,U_{tt}(x,t)\,\d x\,\d t =\imp\left(\imp U_{xx}\,U_{tt}\,\d x\right)\d t =-\imp\left(\imp U_xU_{xtt}\,\d x\right)\d t,\]
so the result follows, since the order of integration can be switched.
\end{proof}

Consequently we have the following result.
\begin{theorem}\label{ref:thm2}
If $u_n(x,t)$ is an algebraically decaying rational solution of the Boussinesq equation \eqref{eq:bq} given by \eqref{ungen}, then
\beq \frac{1}{8\pi}\dimp u_n^3(x,t)\,\d x\,\d t=n(n+1).\eeq
\end{theorem}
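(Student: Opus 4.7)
The plan is to collapse the three identities developed just above the statement into a single linear relation between $\dimp u_n^3\,\d x\,\d t$ and $\dimp u_n^2\,\d x\,\d t$, and then invoke Theorem~\ref{ref:thm1}. For brevity, write
\[
I_2 := \dimp u_n^2\,\d x\,\d t,\quad I_3:= \dimp u_n^3\,\d x\,\d t,\quad I_v:= \dimp v_n^2\,\d x\,\d t,\quad J:=\dimp u_n\,U_{n,tt}\,\d x\,\d t,
\]
where $U_n = 2\ln F_n$ and $v_n = -2(\ln F_n)_{xt}$ are the natural auxiliaries for the rational solution $u_n$. Lemma~\ref{ref:lem1} gives $I_v = J$; equation~\eqref{eq10b} gives $I_2 = 2I_v + J$; and the first identity of~\eqref{eq11ab} gives $I_3 = \tfrac32(I_2+J)$.

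First I would eliminate $I_v$ and $J$. Substituting $J=I_v$ into the second identity yields $I_2=3I_v$, hence $J=I_v=\tfrac13 I_2$. Plugging this into the third identity gives
\[
I_3 \;=\; \tfrac32\bigl(I_2 + \tfrac13 I_2\bigr) \;=\; 2I_2.
\]
Dividing by $8\pi$ and applying Theorem~\ref{ref:thm1} then produces $\frac{1}{8\pi}I_3 = 2\cdot\tfrac12 n(n+1) = n(n+1)$, which is the desired equality.

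I do not anticipate a serious obstacle at this point: all of the analytic content, namely the vanishing of the boundary terms in the integrations by parts leading to \eqref{eq3}, \eqref{eq4} and \eqref{eq11ab}, the fact that $c_4=k_3=0$ for these rational solutions (Section~\ref{sec3}), and the Fubini-type identity in Lemma~\ref{ref:lem1}, has already been established. The only sanity check worth flagging is that the various iterated integrals really are absolutely convergent and that boundary contributions vanish for the algebraically decaying rational solutions; but since $F_n\sim (x^2+t^2)^{n(n+1)/2}$ as $r=\sqrt{x^2+t^2}\to\infty$, one has $U_{n,x},U_{n,t}=O(r^{-1})$ and $u_n,v_n=O(r^{-2})$, so $u_n^2$ and $u_n^3$ decay like $r^{-4}$ and $r^{-6}$ respectively, which is comfortably enough to legitimise every manipulation above.
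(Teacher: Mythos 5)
Your proposal is correct and follows essentially the same route as the paper: both combine equation \eqref{eq10b} with Lemma \ref{ref:lem1} to get $\dimp u_n^2\,\d x\,\d t=3\dimp v_n^2\,\d x\,\d t$, deduce $\dimp u_n^3\,\d x\,\d t=2\dimp u_n^2\,\d x\,\d t$ (the paper via \eqref{eq34b}, you via the algebraically equivalent first identity of \eqref{eq11ab}), and then invoke Theorem \ref{ref:thm1}.
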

\begin{proof}
From equation \eqref{eq10b} and Lemma \ref{ref:lem1} we see that
\beq\dimp u^2\,\d x\,\d t=\dimp \left(2v^2+u\,U_{tt}\right)\,\d x\,\d t=3\dimp v^2\,\d x\,\d t.\label{eq15}\eeq
Then from equations \eqref{eq34b} and \eqref{eq15} we have
\beq\dimp u^3\,\d x\,\d t =2\dimp u^2\,\d x\,\d t,\eeq
and so using Theorem \ref{ref:thm1} we obtain the result.
\end{proof}

\begin{corollary}\label{ref:lem2}
If $v_n(x,t)$ is an algebraically decaying rational solution of the system \eqref{eq:bqs}, then
\beq \frac{1}{8\pi} \dimp v_n^2(x,t)\,\d x\,\d t =\tfrac16n(n+1)
.\label{eq14}
\eeq
\end{corollary}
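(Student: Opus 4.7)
\textbf{Proof plan for Corollary \ref{ref:lem2}.}
The statement is a direct consequence of two ingredients already in hand: Theorem \ref{ref:thm1}, which evaluates the integral of $u_n^2$, and the identity established midway through the proof of Theorem \ref{ref:thm2}, which relates the integrals of $u_n^2$ and $v_n^2$. There is essentially no new analytic work required --- the corollary is obtained by rearranging identities the previous proof already assembled.

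More precisely, my plan is to invoke equation \eqref{eq15}, which reads
\[\dimp u_n^2(x,t)\,\d x\,\d t = 3\dimp v_n^2(x,t)\,\d x\,\d t.\]
This identity was derived by combining \eqref{eq10b} (itself obtained from the two conservation laws \eqref{conlaw3} and \eqref{conlaw4} together with the second-order form \eqref{eq:bq2} of the Boussinesq equation) with Lemma \ref{ref:lem1}, which expressed $\dimp v^2\,\d x\,\d t$ as $\dimp u\,U_{tt}\,\d x\,\d t$ under the stated decay hypotheses on $U_x$. Those decay hypotheses hold for the algebraically decaying rational solutions $u_n(x,t;\a,\b)$, since $U_n=2\ln F_n$ with $F_n$ of the form \eqref{FnGn}, so $U_{n,x}=2F_{n,x}/F_n=\O(1/|x|)$ as $|x|\to\infty$ and similarly in $t$.

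Therefore, dividing \eqref{eq15} by $3$ and applying Theorem \ref{ref:thm1} gives
\[\frac{1}{8\pi}\dimp v_n^2(x,t)\,\d x\,\d t = \frac{1}{3}\cdot\frac{1}{8\pi}\dimp u_n^2(x,t)\,\d x\,\d t = \tfrac13\cdot\tfrac12 n(n+1)=\tfrac16 n(n+1),\]
as claimed. Since every step is a direct citation of an earlier equation, there is no real obstacle; the only thing to check is that the decay assumptions of Lemma \ref{ref:lem1} are genuinely satisfied by the solutions \eqref{ungen}, which follows from the explicit leading-order behaviour $F_n\sim (x^2+t^2)^{n(n+1)/2}$ recorded in \eqref{FnGn}.
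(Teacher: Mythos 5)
Your proposal is correct and follows exactly the paper's own route: the corollary is deduced by combining the identity \eqref{eq15}, namely $\dimp u_n^2\,\d x\,\d t = 3\dimp v_n^2\,\d x\,\d t$, with Theorem \ref{ref:thm1}. Your additional check that the decay hypotheses of Lemma \ref{ref:lem1} hold for the rational solutions is a harmless (and sensible) extra, but the argument is the same as in the paper.
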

\begin{proof}The result follows immediately from equation \eqref{eq15} and Theorem \ref{ref:thm1}.
\end{proof}

\subsection{Integrals of $u_1^m(x,t)$.}
Finally we consider the integral of $u_1^m(x,t)$, for $m\geq2$. 
\begin{theorem} Consider the rational solution of the Boussinesq equation given by 
\beq u_1(x,t)=2\pderiv[2]{}{x}\ln(x^2+t^2+1)=\frac{4(1-x^2+t^2)}{(1+x^2+t^2)^2}.\eeq Then 
\beq \label{int:u1m}\frac{1}{8\pi}\dimp u_1^m(x,t)\,\d x\,\d t=\frac{m!}{(2m-1)!}\sum_{\ell=0}^{\lfloor m/2\rfloor}\frac{(2\ell)!(2m-2\ell-2)!}{2^{2\ell-2m-3}(\ell !)^2(m-2\ell)!},\eeq where $\lfloor x\rfloor$ is the largest integer less than or equal to $x$, for $m$ an integer with $m\geq2$.
\end{theorem}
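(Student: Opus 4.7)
The plan is to reduce the double integral to standard one-dimensional integrals by exploiting the near-rotational symmetry of the integrand. Writing $u_1^m = 4^m(1+t^2-x^2)^m/(1+x^2+t^2)^{2m}$ and passing to polar coordinates $x=r\cos\phi$, $t=r\sin\phi$ gives $1+x^2+t^2=1+r^2$ and $1+t^2-x^2=1-r^2\cos(2\phi)$, so that
\[
\dimp u_1^m\,\d x\,\d t = 4^m\int_0^{2\pi}\!\!\int_0^\infty\frac{(1-r^2\cos 2\phi)^m}{(1+r^2)^{2m}}\,r\,\d r\,\d\phi.
\]
The angular and radial variables are coupled only through the numerator, so the next step is to separate them by the binomial theorem.

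Expanding $(1-r^2\cos 2\phi)^m=\sum_{k=0}^m\binom{m}{k}(-1)^k r^{2k}\cos^k(2\phi)$ and interchanging the finite sum with the integrals reduces the problem to evaluating, for each $k$, an angular and a radial integral. The angular integrals $\int_0^{2\pi}\cos^k(2\phi)\,\d\phi$ vanish for odd $k$, and for $k=2\ell$ they equal $2\pi(2\ell)!/[4^\ell(\ell!)^2]$ by the Wallis formula; this accounts for the summation over $\ell=0,1,\ldots,\lfloor m/2\rfloor$ in \eqref{int:u1m}. The radial integrals, after the substitution $s=r^2$, reduce to beta functions
\[
\tfrac12\int_0^\infty\frac{s^{2\ell}\,\d s}{(1+s)^{2m}}=\tfrac12 B(2\ell+1,2m-2\ell-1)=\frac{(2\ell)!(2m-2\ell-2)!}{2(2m-1)!},
\]
which converge precisely because $m\ge 2$ guarantees $2m-2\ell-1>0$ for every $\ell\le\lfloor m/2\rfloor$.

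The final step is to collect the prefactors: the binomial coefficient $\binom{m}{2\ell}$, the sign $(-1)^{2\ell}=1$, the Wallis factor $2\pi(2\ell)!/[4^\ell(\ell!)^2]$, and the beta value $(2\ell)!(2m-2\ell-2)!/[2(2m-1)!]$, together with the overall $4^m$ and the normalisation $1/(8\pi)$. Simplifying then produces the closed form asserted in \eqref{int:u1m}. I expect no conceptual obstacle; the main labour is careful bookkeeping of the powers of $2$ coming from $4^m$, $4^{-\ell}$, the $\tfrac12$ of the $s=r^2$ substitution, and the $1/(8\pi)$ factor, which combine to give the exponent appearing in the denominator of the summand. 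One preliminary point worth checking is that $u_1^m\in L^1(\mathbb{R}^2)$ for $m\ge 2$, which follows from the $O(r^{-2})$ decay of $u_1$ at infinity and justifies both Fubini's theorem and the interchange of summation with integration used throughout.
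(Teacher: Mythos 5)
Your proposal is correct and follows essentially the same route as the paper's own proof: polar coordinates (the paper uses $\rho=r^2$ directly, you substitute $s=r^2$ afterwards), binomial expansion of the numerator, Wallis-type integrals $\int_0^{2\pi}\cos^{2\ell}(2\ph)\,\d\ph=2\pi(2\ell)!/[4^\ell(\ell!)^2]$ for the angular part and beta functions for the radial part, with the same convergence check $2\ell\le m<2m-1$. One bookkeeping remark: collecting the constants as you describe actually yields the factor $2^{2m-2\ell-3}$ (equivalently $2^{2\ell-2m+3}$ in the denominator), which for $m=2$ gives the value $1$ consistent with Theorem \ref{thm1}, so the exponent $2\ell-2m-3$ printed in \eqref{int:u1m} appears to be a typo rather than a defect in your argument.
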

\begin{proof}
We need to evaluate 
\[\dimp u_1^m(x,t)\,\d x\,\d t=2^{2m}\int_{-\infty}^{\infty}\int_{-\infty}^{\infty} \frac{(1-x^2+t^2)^m}{(1+x^2+t^2)^{2m}}\,\d x\,\d t,\]
for all integers $m\geq 2$. If we make the transformation $x=\sqrt{\rho}\,\cos\ph$, $t=\sqrt{\rho}\,\sin\ph$ then
\[\dimp u_1^m(x,t)\,\d x\,\d t =2^{2m-1}\int_0^\infty\int_0^{2\pi}\sum_{k=0}^m
 (-1)^k{m \choose k}\frac{\rho^k\cos^k(2\ph) }{(1+\rho)^{2m}} \,\d\ph\,\d\rho.\]
 Elementary results give that 
\[\int_0^{2\pi}\cos^{2\ell}(2\ph)\,\d\ph=\frac{2\sqrt{\pi}\,\Gamma(\ell+\tfrac12)}{\Gamma(\ell+1)}
\equiv\frac{\pi}{2^{2\ell -1}}\frac{ (2\ell)!}{(\ell !)^2},\qquad\qquad \int_0^{2\pi}\cos^{2\ell+1}(2\ph)\,\d\ph=0,\]
for integer $\ell$, which, when combined with the knowledge that
\[\int_0^{\infty}\frac{\rho^k}{(1+\rho)^{2m}}\,\d\rho=\frac{\Gamma(k+1)\,\Gamma(2m-k-1)}{\Gamma(2m)}
\equiv\frac{ k!(2m-k-2)!}{(2m-1)!},\qquad{\rm if}\quad k<2m-1,\]
 leads to the desired result \eqref{int:u1m}.
\end{proof}

\begin{remark}{\rm
We note that for the rational solution $u_1(x,t)$
\[\frac{1}{8\pi}\imp u_1(x,t)\,\d x= \frac{1}{2\pi}\imp \frac{1-x^2+t^2}{(1+x^2+t^2)^2}\,\d x=0,\]
and therefore
\[\frac{1}{8\pi}\imp\left(\imp u_1(x,t)\,\d x\right)\d t= 0.\]
On the other hand
\[\frac{1}{8\pi}\imp u_1(x,t)\,\d t = \frac{1}{2\pi}\imp \frac{1-x^2+t^2}{(1+x^2+t^2)^2}\,\d t= \frac{1}{2(x^2+1)^{3/2}},\]
and then 
\[\frac{1}{8\pi}\imp\left(\imp u_1(x,t)\,\d t\right)\d x= 1.\]
So for $u_1(x,t)$ the order of integration is important since $u_1(x,t)\not\in L^1(\mathbb{R}^2)$. In fact more generally $u_n(x,t)\not\in L^1(\mathbb{R}^2)$.}\end{remark}

\section{\label{seckp}Rational solutions of the \kp\ I equation}
The \kp\ (KP) equation
\begin{equation}\label{eq:kp}
(v_\tau+6vv_\xi+v_{\xi\xi\xi})_\xi+3\sigma^2 v_{\eta\eta}=0, \qquad \sigma^2=\pm1,
\end{equation} 
which is known as KPI if $\sigma^2=-1$, i.e.\  \eqref{eq:kp1}, 
and KPII if $\sigma^2=1$, was derived by Kadomtsev and Petviashvili \cite{refKP} to model ion-acoustic waves of small amplitude propagating in plasmas and is a two-dimensional generalisation of the KdV equation \eqref{eq:kdv}.
The KP equation arises in many physical applications including weakly two-dimen\-sion\-al long waves in shallow water \cite{refAS79,refSegurF}, where the sign of $\sigma^2$ depends upon the relevant magnitudes of gravity and surface tension.
The KP equation \eqref{eq:kp} is also a completely integrable soliton equation solvable by inverse scattering and again the sign of $\sigma^2$ is critical since if $\sigma^2=-1$, then the inverse scattering problem is formulated in terms of a Riemann-Hilbert problem \cite{refFA83,refManakov}, whereas for $\sigma^2=1$, it is formulated in terms of a $\overline\partial$ (``DBAR'') problem \cite{refABF}. 

The first rational solution of KPI \eqref{eq:kp1},  is the so-called ``lump solution"
\begin{equation}\label{kp:lump}v(\xi,\eta,\tau)=2\pderiv[2]{}{\xi}\ln [(\xi-3\tau)^2+\eta^2+1] 
= -4\frac{(\xi-3\tau)^2-\eta^2-1}{[(\xi-3\tau)^2+\eta^2+1]^2},\end{equation}
which was found by Manakov \etal\ \cite{refMZBIM}. Subsequent studies of rational solutions, also known as lump solutions, of KPI \eqref{eq:kp1} include Ablowitz \etal\ \cite{refACTV},  Ablowitz and Villarroel \cite{refAV,refVA99}, Dubard and Matveev \cite{refDubMat11,refDubMat13}, Gaillard \cite{refGail16a,refGail16b}, Johnson and Thompson \cite{refJT}, 
Ma \cite{refMa}, Pelinovsky \cite{refPel94,refPel98}, Pelinovsky and Stepanyants \cite{refPelStep}, 
Satsuma and Ablowitz \cite{refSatAb79}, and Singh and Stepanyants \cite{refSS}.

Dubard and Matveev \cite{refDubMat11,refDubMat13} derive rational solutions of KPI \eqref{eq:kp1}
from generalised rational solutions of the focusing NLS equation \eqref{eq:fnls} which are discussed in Appendix A below; see also \cite{refDGKM,refGail16a,refGail16b}. 
\comment{Specifically Dubard and Matveev \cite{refDubMat11,refDubMat13} show that 
\begin{align}v(\xi,\eta,\tau)&
=2\pderiv[2]{}{\xi}\ln \widehat{D}_2(\xi-3\tau,\eta;\a,-48\tau) 
= \tfrac12\left(|\widehat{\psi}_2(x,t;\a,\b)|^2-1\right)\Big|_{x=\xi-3\tau,t=\eta,\beta=-48\tau}, \label{kp1sol:nls}\end{align}
is a solution of KPI \eqref{eq:kp1}.
If we define $F_2^{\rm nls}(\xi,\eta,\tau;\a)=\widehat{D}_2 (\xi-3\tau,\eta;\a,-48\tau)$, then 
\begin{align}
F_2^{\rm nls}(\xi,\tau;\a)
&={\xi}^{6}-18 \tau{\xi}^{5}+ 3\left(45 {\tau}^{2}+ {\eta}^{2}+{1}\right) {\xi}^{4} - 12\left(45 {\tau}^{2}+3 {\eta}^{2}{-\ 5} \right) \tau{\xi}^{3}\nonumber\\
&\phantom{={\xi}^{6}\ }+ \big\{ 3 {\eta}^{4}+ 18\left( 9 {\tau}^{2}{-\ 1} \right) {\eta}^{2}+1215 {\tau}^{4}{-\ 702 {\tau}^{2}+27} \big\} {\xi}^{2}
\nonumber\\
&\phantom{={\xi}^{6}\ }- \big\{ 18 \tau{\eta}^{4}+ 36\left(9 {\tau}^{2}+5  \right) \tau {\eta}^{2}+1458 {\tau}^{5}{-\ 2268 {\tau}^{3}+450 \tau} \big\}\xi \nonumber\\
&\phantom{={\xi}^{6}\ }+{\eta}^{6}+27 \left(  {\tau}^{2}+{1} \right) {\eta}^{4}+ 9\left( 27 {\tau}^{4}+{78 {\tau}^{2}+11} \right) {\eta}^{2}
+ 729 {\tau}^{6}{-\ 2349 {\tau}^{4}+3411 {\tau}^{2}+9}.\label{kp1F:nls2}
\end{align}
The polynomial $F_2^{\rm nls}(\xi,\tau;\a)$ satisfies
\begin{equation}\left(\D_\xi^4+\D_\xi\D_\tau-3\D_\eta^2\right)F_2\cdot F_2=0,\end{equation}
which is the bilinear form of KPI \eqref{eq:kp1}, and so
\begin{equation}\label{kp1sol:nls2}
v_2^{\rm nls}(\xi,\eta,\tau;\a)=2\pderiv[2]{}{\xi}\ln F_2^{\rm nls}(\xi,\eta,\tau;\a),\end{equation}
is a rational solution of KPI \eqref{eq:kp1}.}

The Boussinesq equation \eqref{eq:bq} is a symmetry reduction of KPI \eqref{eq:kp1} and so the rational solutions  
of the Boussinesq equation can be used to generate rational solutions of KPI.
If in KPI \eqref{eq:kp1} we make the travelling wave reduction
\[ v(\xi,\eta,\tau)=u(x,t),\qquad x=\xi-3\tau,\quad t=\eta,\]
then $u(x,t)$ satisfies the Bouss\-inesq equation \eqref{eq:bq}.
Consequently given a solution of the Bouss\-inesq equation \eqref{eq:bq}, then we can derive a solution of KPI \eqref{eq:kp1}. In particular, if 
\[ u(x,t)=2\pderiv[2]{}{x}\ln F(x,t),\]
for some known $F(x,t)$, is a solution of the Bouss\-inesq equation \eqref{eq:bq}, then
\[ v(\xi,\eta,\tau)=2\pderiv[2]{}{\xi}\ln F(\xi-3\tau,\eta),\]
is a solution of KPI \eqref{eq:kp1}. For example the choice $F(x,t)=x^2+t^2+1$ gives the lump solution \eqref{kp:lump} of KPI.  In \cite{refCD16} Clarkson and Dowie show that the family of rational solutions for KPI \eqref{eq:kp1} derived through this approach are fundamentally different from those derived by Dubard and Matveev \cite{refDubMat11,refDubMat13} from the rational solutions of the  focusing NLS equation \eqref{eq:fnls}.

Ablowitz \etal\ \cite{refACTV,refAV,refVA99} derived a hierarchy of algebraically decaying rational solutions, or  lump solutions, of KPI \eqref{eq:kp1} which have the form
\begin{equation}\label{KPi:ACTV}
v_m(\xi,\eta,\tau)=2\pderiv[2]{}{\xi} \ln G_m(\xi,\eta,\tau),\end{equation}
where $G_m(\xi,\eta,\tau)$ is a polynomial of degree $2m$ in $\xi$, $\eta$ and $\tau$. These rational solutions are derived in terms of the eigenfunctions of the non-stationary \sch\ equation
\begin{equation}\label{KPist1}
\i\varphi_{\eta}+\varphi_{\xi\xi}+v\varphi=0,
\end{equation}
with potential $v=v(\xi,\eta,\tau)$, which is used in the solution of KPI \eqref{eq:kp1} by inverse scattering; equation \eqref{eq:kp1} is obtained from the compatibility of \eqref{KPist1} and 
\begin{equation}\label{KPist2}
\varphi_{\tau}+4\varphi_{\xi\xi\xi}+6v\varphi_{\xi}+w\varphi=0,\qquad w_{\xi}=v.
\end{equation}
This is a different hierarchy of rational solutions of KPI \eqref{eq:kp1} compared to those discussed above, not least because it involves polynomials of \textit{all} even degrees, not just of degree $n(n+1)$, with $n\in\mathbb{N}$. These rational solutions of KPI \eqref{eq:kp1} are deeply connected with an integer called the ``charge" or ``index", and this number is related to the degree of the polynomial that generates the rational solution \cite{refACTV,refAV,refVA99}. It would be interesting to investigate whether there is an analogous result for the rational solutions of the Bouss\-inesq equation \eqref{eq:bq} discussed here which might explain the results. However we shall not pursue this further here.

\comment{which is either n or closely related to the integer n used in this paper. And the generating polynomial satisfies the same eq (2.2).  It may well be that the solutions of the Boussinesq eq are not only related to those of the KP eq but the connection to the KP eq can further elucidate deep underlying properties of the rational solutions the authors discuss, in particular the charge/index.}

\section{\label{sec5}Discussion}
Amongst his extensive contributions to fluid mechanics, David Benney conducted many studies of nonlinear wave equations. Long waves were the topic of interest in \cite{refBenney73,refBenney06,refBR,refYB} while lump solutions of the modified Zakharov-Kuznetsov equation are studied in \cite{refSB}. A recurring theme in Benney's work was that of conservation laws and in this spirit here 
we have been concerned with conservation laws and integral relations associated with algebraically decaying rational solution of the Boussinesq equation \eqref{eq:bq} given by \eqref{ungen}. In addition to the results discussed above, we have performed numerical investigations of higher integral relations, i.e.\ 
\[ \frac{1}{8\pi}\int_{-\infty}^{\infty} \int_{-\infty}^{\infty} u_n^m(x,t;\a,\b)\,\d x\,\d t,\] for $m\geq4$, where $u_n(x,t;\a,\b)$ is an algebraically decaying rational solution of the Boussinesq equation given by \eqref{ungen}, with $n\geq2$. However for $m\geq4$ there appears to be no pattern analogous to the results for $m=2$ and $m=3$ given in Theorem \ref{thm1}.

It is interesting to compare the results described here with analogous conservation laws and integral relations for rogue wave solutions of the NLS equation \eqref{eq:fnls}. It is straightforward to show that the first few constants of motion and associated fluxes for rogue wave solutions of NLS equation \eqref{eq:fnls} are zero, which was the case for the Boussinesq equation \eqref{eq:bq}; 
the first few rogue wave solutions and conservation laws for the NLS equation \eqref{eq:fnls} are given in Appendix A below.
Ankiewicz and Akhmediev \cite{refAA15} conjectured that the number of components in any NLS rogue wave is a triangle number which can be calculated as the integral of the squared deviation from the background level across the space-time plane.

\begin{conjecture}{Suppose that $\psi_n(x,t)$ 
is a rogue wave 
solution of the focusing NLS equation \eqref{eq:fnls}
then
\beq Q_n= \frac{1}{32\pi}\int_{-\infty}^{\infty} \int_{-\infty}^{\infty} \left[|\psi_n(x,t)|^{2}-1\right]^2\d x\,\d t=\tfrac12n(n+1).\eeq
}\end{conjecture}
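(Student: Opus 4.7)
My plan is to imitate the Boussinesq argument of \S\ref{sec4}. That proof had three ingredients: (i) a representation of the nonlinear quantity as a second $x$-derivative of a polynomial log-potential, (ii) a pointwise PDE identity rewriting its square as a sum of total derivatives, and (iii) an asymptotic evaluation of the resulting boundary integrals over $[-R,R]^2$ that extracts the total polynomial degree through an arctan identity. For the NLS conjecture, ingredient (i) is supplied by the standard Hirota bilinear representation of the rogue-wave hierarchy: writing $\psi_n=(g_n/f_n)\exp(\i t/2)$ with $f_n$ a real polynomial of total degree $n(n+1)$ in $x$ and $t$, the quadratic bilinear equation forces $|\psi_n|^2-1=c\,\partial_{xx}\ln f_n$ for a constant $c$ depending on the scaling. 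Setting $U_n=\ln f_n$ places us in exactly the same shape as \S\ref{sec4}, with $U_{n,xx}$ playing the role of $u_n$ there.

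For ingredient (ii) I would look for an identity of the form
\[
(|\psi_n|^{2}-1)^{2}=\partial_x A+\partial_t B+\partial_{xx}C+\partial_{tt}D,
\]
with $A,B,C,D$ polynomial expressions in $\psi_n$, $\bar\psi_n$ and their first derivatives that decay rapidly at infinity. Multiplying \eqref{eq:fnls} by $\bar\psi$ and taking real parts produces the pointwise energy relation $|\psi|^{4}=2\,\mathrm{Im}(\bar\psi\psi_t)-\partial_{xx}|\psi|^{2}+2|\psi_x|^{2}$; after the background-subtracting substitution $\phi=\psi\exp(-\i t/2)$ (so that $\phi\to 1$ at infinity and the inconvenient constant in $(|\psi|^{2}-1)^{2}=|\phi|^{4}-2|\phi|^{2}+1$ is absorbed into the algebra), this becomes an identity relating $(|\phi|^{2}-1)^{2}$ to $\mathrm{Im}(\bar\phi\phi_t)$, $|\phi_x|^{2}$ and $\partial_{xx}(|\phi|^{2}-1)$. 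Combining with mass conservation $\partial_t|\phi|^{2}+2\partial_x\,\mathrm{Im}(\bar\phi\phi_x)=0$ and re-expressing $|\phi_x|^{2}$ and $\mathrm{Im}(\bar\phi\phi_t)$ through the bilinear identities satisfied by $f_n$ and $g_n$ should produce the divergence form required; the resulting identity would be the NLS analogue of the Boussinesq relation $u^{2}=U_{tt}+U_{xx}-\tfrac13 U_{xxxx}$, though presumably longer.

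Granted such an identity, ingredient (iii) runs parallel to \S\ref{sec4}. The higher-order divergence pieces $\partial_{xx}C$ and $\partial_{tt}D$ contribute boundary terms that vanish as $R\to\infty$ because the relevant logarithmic derivatives of $f_n$ of order $\geq 3$ decay like $|(x,t)|^{-3}$ or faster; the surviving $\partial_x A$ and $\partial_t B$ pieces have leading behaviour dictated by $f_n\sim(x^{2}+t^{2})^{n(n+1)/2}$, so after the substitutions $t=\tau R$ and $x=\xi R$ on opposite pairs of sides the boundary integrals collapse to $\tfrac{n(n+1)}{2\pi}\int_{-1}^{1}\!\frac{\d\tau}{1+\tau^{2}}$, yielding a total of $\tfrac12 n(n+1)$ after the overall $1/(32\pi)$ normalisation.

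The principal obstacle is step (ii). The Boussinesq proof was short because the scalar PDE \eqref{eq:bq} can be integrated twice in $x$ in a single line to yield $u^{2}=U_{tt}+U_{xx}-\tfrac13U_{xxxx}$; for NLS the analogous manipulation seems to require combining mass and energy conservation laws simultaneously while keeping careful track of the oscillating background, and no such compact identity appears in the literature. A cleaner alternative might be to work directly in the bilinear $(f_n,g_n)$ variables and derive a quadratic Hirota identity for $f_n$ alone, in the spirit of the Ablowitz--Villarroel treatment of KPI lumps recalled in \S\ref{seckp}; that route stays polynomial at every step but transfers the difficulty to establishing an underlying Wronskian identity. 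Either way it is the genuine $f_n\leftrightarrow g_n$ coupling in NLS---absent in the purely real Boussinesq setting---that makes the conjecture non-trivial, and is presumably why it remains open.
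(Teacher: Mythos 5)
The statement you were asked about is not proved in the paper at all: it is the conjecture of Ankiewicz and Akhmediev \cite{refAA15}, quoted by the authors precisely because it is the (unproven) NLS analogue of their Boussinesq result \eqref{int:un2}. So there is no proof in the paper to compare yours against, and your proposal does not supply one either. Your step (i) is fine and is already in the paper's Appendix A ($|\psi_n|^2=1+4\,\partial_{xx}\ln D_n$, so the conjecture is equivalent to $\frac{1}{2\pi}\dimp(\partial_{xx}\ln D_n)^2\,\d x\,\d t=\tfrac12 n(n+1)$, formally identical to the Boussinesq integral). But the whole weight of the Boussinesq argument rests on the fact that the scalar PDE itself, integrated twice in $x$, delivers the pointwise divergence identity \eqref{eq:bq2}, $u^2=U_{tt}+U_{xx}-\tfrac13U_{xxxx}$, whose $x$- and $t$-fluxes combine into essentially the gradient of $\ln F_n$ with \emph{equal unit coefficients}; it is exactly this structure that makes the boundary integral a winding-type count of the total degree $n(n+1)$. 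Your step (ii), which is supposed to produce the NLS analogue of that identity, is exactly the point you leave open: the energy relation $|\psi|^4=2\,\mathrm{Im}(\bar\psi\psi_t)-\partial_{xx}|\psi|^2+2|\psi_x|^2$ is correct but neither $\mathrm{Im}(\bar\psi\psi_t)$ nor $|\psi_x|^2$ is a total derivative, and no combination with mass conservation that you exhibit turns $(|\psi_n|^2-1)^2$ into a divergence of rapidly decaying quantities. Since you concede this yourself, what you have is a programme, not a proof, and the gap sits at its only genuinely hard step.

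A second, quieter gap is in step (iii): even granting an identity $(|\psi_n|^2-1)^2=\partial_xA+\partial_tB+\partial_{xx}C+\partial_{tt}D$, the numerical outcome $\tfrac12 n(n+1)$ is not automatic. In the Boussinesq computation the value $\tfrac12 n(n+1)$ emerges because the surviving flux is precisely $\tfrac12\bigl(\partial_x(F_{n,x}/F_n)+\partial_t(F_{n,t}/F_n)\bigr)$ (up to the harmless $U_{xxxx}$ term), so each pair of sides of the square contributes $\tfrac14 n(n+1)$ via the arctan integral; different relative coefficients in $A$ and $B$, or additional slowly decaying pieces (recall $u_n\notin L^1(\mathbb{R}^2)$, so conditionally convergent contributions such as $\dimp(|\psi_n|^2-1)\,\d x\,\d t$ depend on the order of integration), would change the constant. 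So the conclusion cannot be asserted before the identity is in hand with its exact coefficients. In short: the paper leaves this as a conjecture, and so does your argument; to close it you would need to actually produce the divergence identity (for instance from the Hirota bilinear relations satisfied by $D_n$, $G_n$, $H_n$), not merely indicate where one might look for it.
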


This conjecture is the analogue of \eqref{int:un2}; the factor of $4$ is due to the fact that Ankiewicz and Akhmediev \cite{refAA15} consider an NLS equation which is obtained from \eqref{eq:fnls} by letting $t\to \tfrac12x$ and $x\to \tfrac12t$.

\begin{figure}
\[ \begin{array}{c@{\quad}c} 
\includegraphics[width=7.5cm]{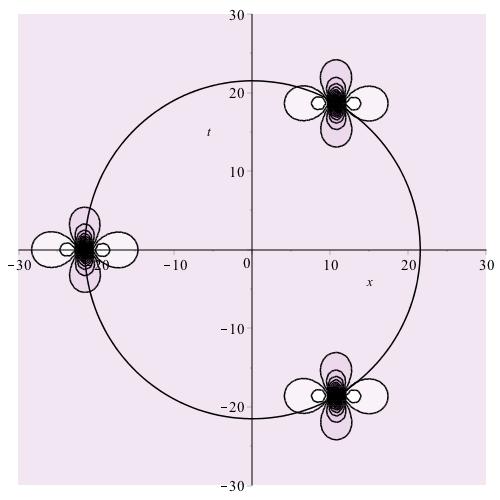}& \includegraphics[width=7.5cm]{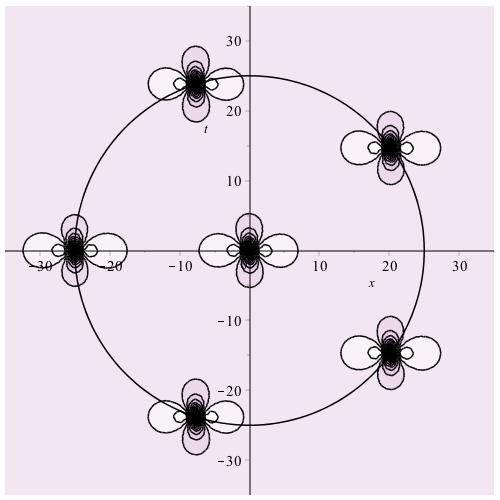}\\
\uu_2(x,t;0,10^4) &  \uu_3(x,t;0,10^7)  
\end{array}\]
\caption{\label{fig51}Contour plots of the solutions $\uu_2(x,t;0,10^4)$ and $\uu_3(x,t;0,10^7)$ of the Boussinesq equation.}
\end{figure}

\begin{figure}
\[ \begin{array}{c@{\quad}c} 
\includegraphics[width=7.5cm]{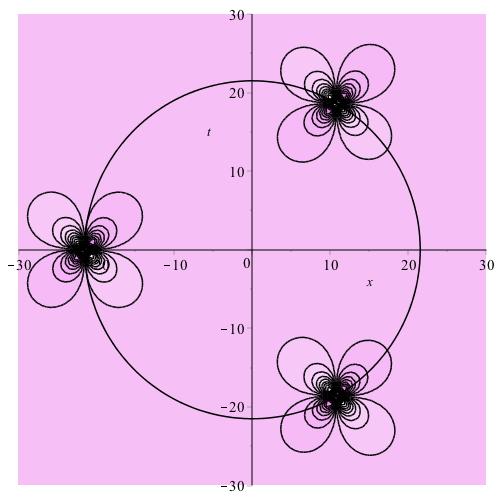}& \includegraphics[width=7.5cm]{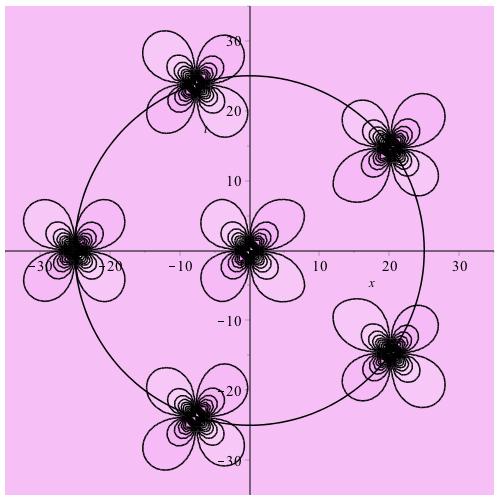}\\
v_2(x,t;0,10^4) & v_3(x,t;0,10^7)  
\end{array}\]
\caption{\label{fig52}Contour plots of the solutions $v_2(x,t;0,10^4)$ and $v_3(x,t;0,10^7)$ of the Boussinesq system.}
\end{figure}

We note that if the parameters $\a$ and $\b$ sufficiently large, then the rational solution $\uu_2(x,t;\a,\b)$ has three lumps which are essentially copies of the lowest-order solution, of approximately the same height and equally spaced on a circle. An analogous situation arises for the second generalised rational solution of the NLS equation \cite{refKAA12}.
Further, for $\a$ and $\b$ sufficiently large, the rational solution $\uu_3(x,t;\a,\b)$ has six lumps, again essentially copies of the lowest-order solution, which are of approximately the same height and with five of these equally spaced on a circle. Again an analogous situation arises for the third generalised rational solution of the NLS equation \cite{refKAA11}.
For both cases we conjecture that the radius of the circle is equal to $(\a^2+\b^2)^{1/h}$, for some $h$ which depends on $n$, as appears to be the case for the NLS equation \cite{refKAA11,refKAA12}, though we shall not investigate this further here.
Contour plots of the solutions $\uu_2(x,t;0,\b)$ and $\uu_3(x,t;0,\b)$, for $\b=10^4$ and $\b=10^7$ respectively,  of the Boussinesq equation \eqref{eq:bq} illustrating this behaviour are given in Figure \ref{fig51}.
Contour plots of $v_2(x,t;0,\b)$ and $v_3(x,t;0,\b)$, for $\b=10^4$ and $\b=10^7$ respectively, where
\[ v_n(x,t;\a,\b)=-2\frac{\partial^2}{\partial x\partial t}\ln F_n(x,t;\a,\b),\]
with $F_{n}(x,t;\a,\b)$ given by \eqref{bq:untilde} are given in Figure \ref{fig52}. 

Rogue wave solutions have also been derived for $(2+1)$-dimensional equations such as the Benney-Roskes equation \cite{refBR}, also known as the Davey-Stewartson equation \cite{refDS} (see also \cite{refAC,refAS81})
\begin{subequations}\label{eq:DS}\begin{align} 
&\mbox{$\i$}q_t= q_{xx}+\sigma^2 q_{yy}+(\ep |q|^2-2\phi)q,\\
&\phi_{xx}-\sigma^2 \phi_{yy}=\ep(|q|^2)_{xx},
\end{align}\end{subequations}
where $\sigma^2=\pm1$ and $\ep=\pm1$, independently; see \cite{refOY12,refOY13} for details of rogue wave solutions of \eqref{eq:DS}. It would be interesting to see if there are analogous results to those given in this paper for $(2+1)$-dimensional equations such as equation \eqref{eq:DS}.


\subsection*{Acknowledgments}
AA acknowledges the support of the Australian Research Council (Discovery Project number DP140100265) and the Volkswagen Stiftung.
PAC and ED thank the School of Mathematics \& Physics at the University of Tasmania, Hobart, Australia, for their hospitality during their visit when some of this research was done. 
We thank the reviewer for their helpful comments.

\appendix
\def\ps{\psi^*}
\section{Rational solutions and conservation laws for the focusing NLS equation}
Rational solutions of the focusing NLS equation \eqref{eq:fnls} have the general form
\begin{equation}
\psi_n(x,t)=\left\{1-4\frac{G_n(x,t)+\i tH_n(x,t)}{{D}_n(x,t)}\right\}\exp\left(\tfrac12\i t\right),
\end{equation}
where $G_n(x,t)$ and $H_n(x,t)$ are polynomials of degree $(n+2)(n-1)$ in both $x$ and $t$, with total degree $(n+2)(n-1)$, and ${D}_n(x,t)$ is a polynomial of degree $n(n+1)$ in both $x$ and $t$, with total degree $n(n+1)$ and has no real zeros. \comment{The polynomials $D_n(x,t)$, $G_n(x,t)$ and  $H_n(x,t)$ satisfy the Hirota equations
\begin{align*}
&4(t\D_t+1) H_n\cdot D_n+\D_x^2 D_n\cdot D_n - 4\D_x^2D_n\cdot G_n=0,\\
&\D_t G_n\cdot D_n+t\D_x^2H_n\cdot D_n=0,\\
&\D_x^2D_n\cdot D_n=8G_n^2+8t^2H_n^2-4 D_nG_n.
\end{align*}}%
The first two rational solutions of the focusing NLS equation \eqref{eq:fnls} have the form \cite{refAASG,refACA}
\begin{align}
\psi_1(x,t)&=\left\{1-\frac{4(1+\i t)}{x^2+t^2+1}\right\}\exp\left(\tfrac12\i t\right),\label{nls:psi1}\\
\psi_2(x,t)&=\left\{1-12\,\frac{G_2(x,t)+\i tH_2(x,t)}{{D}_2(x,t)}\right\}\exp\left(\tfrac12\i t\right),\label{nls:psi2}
\end{align}
where 
\begin{align*}
G_2(x,t)&= x^4 + 6(t^2+1)x^2+ 5t^4+18t^2-3,\\
H_2(x,t)&= x^4+2(t^2-3)x^2+ (t^2+5)(t^2-3),\\
{D}_2(x,t)&= x^6+3(t^2+1)x^4 + 3(t^2-3)^2x^2 + t^6+27t^4+99t^2+9. \label{nls:f2}
\end{align*}
Further 
\begin{equation}|\psi_n(x,t)|^2=1+4\pderiv[2]{}{x}\ln D_n(x,t).\end{equation}

Dubard \etal\ \cite{refDGKM} show that the rational solutions of the focusing NLS equation \eqref{eq:fnls} can be generalised to include some arbitrary parameters. The first of these generalised solutions has the form
\begin{align}\label{nls:u2hat}\widehat{\psi}_2(x,t;\a,\b)=
\left\{1-12\frac{\widehat{G}_2(x,t;\a,\b)+\i \widehat{H}_2(x,t;\a,\b)}{\widehat{D}_2(x,t;\a,\b)}\right\}\exp\left(\tfrac12\i t\right),\end{align}
where
\begin{align*}
\widehat{G}_2(x,t;\a,\b)&= 
G_2(x,t) - 2\a t + 2\b x,\\
\widehat{H}_2(x,t;\a,\b)&= 
tH_2(x,t)+ \a(x^2-t^2+1) + 2\b xt,\\
\widehat{D}_2(x,t;\a,\b)&= 
{D}_2(x,t) + 2\a t(3x^2-t^2-9) - 2\b x(x^2-3t^2-3)+\a^2+\b^2,\label{nls:F2}
\end{align*}
with $\a$ and $\b$ arbitrary constants, see also \cite{refAKA,refDubMat11,refDubMat13,refGail11,refGLL,refKAA11,refKAA12,refKAA13,refOY}.

The first few conservation laws for the focusing NLS equation \eqref{eq:fnls} are
\begin{subequations}\label{nlsconlaws}\begin{align}
&(|\psi|^2-1)_t+\i(\psi\ps_x-\ps\psi_x)_x=0,\\
&(\psi\ps_x-\ps\psi_x)_t+\i\left((\ps\psi_{xx}+\psi\ps_{xx})+\tfrac12|\psi|^4-2\psi_x\ps_x\right)_x=0,\\
&\left(4\psi_x\ps_x-|\psi|^4+1\right)_t+2\i\left(|\psi|^2(\ps\psi_x-\psi\ps_x)+2(\psi_x\ps_{xx}-\ps_x\psi_{xx}) \right)_x=0.
\end{align}\end{subequations}
We remark that the conserved quantities in \eqref{nlsconlaws} appear in \cite{refYB}, see equations (4.21)--(4.23) in that paper.

\def\OUP{O.U.P.} 
\def\CUP{C.U.P.} 

\def\refpp#1#2#3#4{\vspace{-0.2cm}
\bibitem{#1} \textrm{\frenchspacing#2}, \textrm{#3}, #4.}

\def\refjl#1#2#3#4#5#6#7{\vspace{-0.2cm}
\bibitem{#1}{\frenchspacing\rm#2}, {\rm#6}, 
\textit{\frenchspacing#3}, \textbf{#4}\ (#7)\ #5.}

\def\refbk#1#2#3#4#5{\vspace{-0.2cm}
\bibitem{#1}{\frenchspacing\rm#2}, ``\textit{#3}," #4 (#5).} 

\def\refcf#1#2#3#4#5#6{\vspace{-0.2cm}
\bibitem{#1} \textrm{\frenchspacing#2}, \textrm{#3},
in ``\textit{#4}" [{\frenchspacing#5}], #6.}

\end{document}